% This is samplepaper.tex, a sample chapter demonstrating the
% LLNCS macro package for Springer Computer Science proceedings;
% Version 2.20 of 2017/10/04
%
\documentclass[runningheads]{llncs}
\usepackage{graphicx}
% Used for displaying a sample figure. If possible, figure files should
% be included in EPS format.
%
% If you use the hyperref package, please uncomment the following line
% to display URLs in blue roman font according to Springer's eBook style:
% \renewcommand\UrlFont{\color{blue}\rmfamily}
\usepackage{amssymb}

\begin{document}
\title{A Polynomial Time Algorithm for Fair Resource Allocation in Resource Exchange\thanks{Supported by the National Nature Science Foundation of China (No. 11301475, 61632017, 61761146005). A preliminary version is accepted by FAW 2019 \cite{yan2019polynomial}. Several omitted proofs are presented in the Appendix.}}
\titlerunning{Polynomial Algorithm for Fair Resource Allocation}
% If the paper title is too long for the running head, you can set
% an abbreviated paper title here
%
\author{Xiang Yan\inst{1} \and
Wei Zhu\inst{2}}
\authorrunning{Yan and Zhu}
% First names are abbreviated in the running head.
% If there are more than two authors, 'et al.' is used.
%
\institute{Shanghai Jiao Tong University, Shanghai 200240, China \email{xyansjtu@163.com} \and
China Academy of Aerospace Standardization and Product Assurance, Beijing 100071, China \email{shakerswei@sina.com}}
\maketitle              % typeset the header of the contribution
\begin{abstract}
The rapid growth of wireless and mobile Internet has led to wide applications of exchanging resources over network, in which how to fairly allocate resources has become a critical challenge. 
% To motive sharing, Wu and Zhang~\cite{WZ} introduced a proportional response protocol for fairness consideration from a combinatorial structure, called \emph{bottleneck decomposition}. 
% They also showed its economic efficiency by the reduction from bottleneck decomposition to a market equilibrium. 
% From the perspective of mechanism design, Cheng \emph{et al}~\cite{CDPY,CDQY} named this protocol as BD Mechanism and proved its truthfulness against two kinds of strategic behaviors.  
To motivate sharing, a \emph{BD Mechanism} is proposed for resource allocation, which is based on a combinatorial structure called \emph{bottleneck decomposition}.
The mechanism has been shown with properties of fairness, economic efficiency~\cite{WZ}, and truthfulness against two kinds of strategic behaviors~\cite{CDPY,CDQY}.
Unfortunately, the crux on how to compute a bottleneck decomposition of any graph is remain untouched. 
In this paper, we focus on the computation of bottleneck decomposition to fill the blanks and prove that the bottleneck decomposition of a network $G=(V,E;w_v)$ can be computed in $O(n^6\log(nU))$, where $n=|V|$ and $U=max_{v\in V}w_v$. 
Based on the bottleneck decomposition, a fair allocation in resource exchange system can be obtained in polynomial time. 
% In addition, our work completes the computation of a market equilibrium and the design of BD mechanism in resource exchange system. 
% Finally we build a connection between two concepts of fairness of allocation protocol.
In addition, our work completes the computation of a market equilibrium and its relationship to two concepts of fairness in resource exchange.

\keywords{Polynomial Algorithm \and Fair Allocation \and Resource Exchange \and Bottleneck Decomposition}
\end{abstract}
\section{Introduction}\label{sec1}
The Internet era has witnessed plenty of implementations of resource exchange \cite{G,H,N,O,Swap}.
It embodies the essence of the \emph{sharing economy} and captures the ideas of collaborative consumption of resource (such as the bandwidth) to networks with participants (or agents)~\cite{FS}, such that agents can benefit from exchanging each own idle resource with others.
In this paper, we study the resource exchange problem over networks, which goes beyond the peer-to-peer
(P2P) bandwidth sharing idea~\cite{WZ}. 
Peers in such networks act as both suppliers and customers of resources, and make their resources directly available to other network peers according to preset network rules \cite{S2001}. 

The resource exchange problem can be formally modeled on an undirected connected graph $G=(V,E;w)$, where
each vertex $u\in V$ represents an agent with $w_u$ units of divisible idle resources (or weight) to be distributed among its neighbor set $\Gamma(u)$. 
The utility $U_u$ is determined by the total amount of the resources obtained from its neighbors. 
Define $x_{uv}$ to be the fraction of resource that agent $u$ allocates to $v$ and call the collection $X=(x_{uv})$ an \emph{allocation}. 
Then the utility of agent $u$ under allocation $X$ is $U_u=\sum_{v\in \Gamma(u)} x_{vu}w_v$, subject to the constraint of $\sum_{v\in \Gamma(u)}x_{uv}\leq 1$.

One critical issues for the resource exchange problem is how to design a resource exchange protocol to maintain agents' participation in a fair fashion. 
Ideally, the resource each agent obtains can compensate its contribution, but such a state may not exists due to the structure of the underlying networks.
Thus Georgiadis \emph{et al}.~\cite{GIT} thought that an allocation is fair if it can balance the exchange among all agents as much as possible. 
An \emph{exchange ratio} of each agent is defined then, to quantify the utility it receives per unit of resource it delivers out, i.e. $\beta_u(X)=\frac{U_u}{\sum_{v\in \Gamma(u)}x_{uv}w_u}$ for given allocation $X$.
In~\cite{GIT}, an allocation $X$ is said to be fair, if its exchange ratio vector $\beta(X)=(\beta_u(X))_{u\in V}$ is lexicographic optimal (lex-optimal for short). 
And a polynomial-time algorithm is designed to find such a fair allocation by transforming it to a linear programming problem.

On the other hand, Wu and Zhang~\cite{WZ} pioneered the concept of ``proportional response" inspired by the idea of ``tit-for-tat" for the consideration of fairness. 
Under a proportional response allocation, each agent responses to the neighbors who offer resource to it by allocating its resource in proportion to how much it receives. 
Formally, the proportional response allocation is specified by $x_{uv}=\frac{x_{vu}w_v}{\sum_{k\in \Gamma(u)}x_{ku}w_k}$. 
The authors showed the equivalence between such a fair allocation and the \emph{market equilibrium} of a pure exchange economy in which each agent sells its own resource and uses the money earned through trading to buy its neighbors' resource.

The algorithm to get a fair allocation in \cite{WZ} includes two parts: computing a combinatorial decomposition, called \emph{bottleneck decomposition}, of a given graph; and constructing a market equilibrium from the bottleneck decomposition. 
The work in \cite{WZ} only involved the latter. 
But how to compute the bottleneck decomposition is remained untouched. 
In this paper, we design a polynomial time algorithm to address the computation of the bottleneck decomposition.
In addition, we show the equilibrium allocation (i.e. the allocation of the market equilibrium) from the bottleneck decomposition also is lex-optimal. 
Such a result establishes the connection between the two concepts of fairness in~\cite{GIT} and~\cite{WZ}.

Another contribution of this work is to complete the computation of the market equilibrium in a special setting of a linear exchange market. 
The study of market equilibrium has a long and distinguished history in economics, starting with Arrow and Deberu's solution~\cite{AD} which proves the existence of the market equilibrium under mild conditions. 
Much work has focused on the computational aspects of the market equilibrium. 
Specially for the linear exchange model, Eaves \cite{E} first presented an exact algorithm by reducing to a linear complementary problem.
Then Garg \emph{et al}.~\cite{GMSV} derived the first polynomial time algorithm through a combinatorial interpretation and based on the characterization of equilibria as the solution set of a convex program. 
Later Ye~\cite{Ye} showed that a market equilibrium can be computed in $O(n^8\log^2 U)$-time with the interior point method. 
Recently, Duan \emph{et al.}~\cite{DGM} improved the running time to $O(n^7\log^2(nU))$ by a combinatorial algorithm. 
Compared with the general linear exchange model, we further assume that the resource of any agent is treated with equal preference by all his neighbors.
Based on it, Wu and Zhang~\cite{WZ} proposed the bottleneck decomposition, which decomposes participants in the market into several components, and showed that in a market equilibrium, trading only happens within each component. 
Therefore, the problem of computing a market equilibrium is reduced to one of computing the bottleneck decomposition. 
Our main task in this paper is to design a polynomial time algorithm to compute the bottleneck decomposition and to complete the computation of a market equilibrium in~\cite{WZ}. 
The time complexity of our algorithm is $O(n^6\log(nU))$, which is better than the algorithm of Duan \emph{et al.}~\cite{DGM}, because of further assumption in our setting.

In the rest of this paper, we introduce the concepts and properties of bottleneck decomposition in Section 2 and provide a polynomial time algorithm for computing the bottleneck decomposition in Section 3. 
In Section 4 we describe a market equilibrium from the bottleneck decomposition and show the fairness of the equilibrium allocation. At last we conclude this paper in Section 5.

\section{Preliminary}\label{sec2}

We consider a resource exchange problem modeled on an undirected and connected graph $G=(V,E;w)$ with vertex set $V$ and edge set $E$, respectively, and $w: V\rightarrow R^{+}$ is the weight function on vertex set. Let $\Gamma(i)=\{j:(i,j)\in E\}$ be the set of vertices adjacent to $i$ in $G$, i.e. the neighborhood of vertex $i$. For each vertex subset $S\subseteq V$, define $w(S)=\sum_{i\in S}w_i$ and $\Gamma(S)=\cup_{i\in S}\Gamma(i)$. Note that it is possible $S\cap \Gamma(S)\neq \emptyset$ and if $S\cap \Gamma(S)=\emptyset$, then $S$ must be independent. For each $S$, define $\alpha(S)=\frac{w(\Gamma(S))}{w(S)}$, referred to as the inclusive expansion ratio of $S$, or the $\alpha$-ratio of $S$ for short. It is not hard to observe that the neighborhood $\Gamma(V)$ is still $V$ and its $\alpha$-ratio is $\alpha(V)=1$.

\begin{definition}[\textbf{Bottleneck and Maximal Bottleneck}]\label{MB}
  A vertex subset $B\subseteq V$ is called a \emph{bottleneck} of $G$ if $\alpha(B)=\min_{S\subseteq V}\alpha(S)$. If bottleneck $B$ is a \emph{maximal bottleneck}, then for any subset $\widetilde{B}$ with $B\subset \widetilde{B}\subseteq V$, it must be $\alpha(\widetilde{B})>\alpha(B)$. We name $(B,\Gamma(B))$ as the maximal bottleneck pair of $G$.
\end{definition}

From Definition \ref{MB}, we can understand the maximal bottleneck as the bottleneck whose size is maximal. Wu and Zhang~\cite{WZ} showed that the maximal bottleneck of any graph is unique and proposed the following bottleneck decomposition with the help of the uniqueness.

\begin{definition}[\textbf{Bottleneck Decomposition}]\label{BD}
  Given an undirected and connected graph $G=(V,E;w)$. Start with $V_1=V$, $G_1=G$ and $i=1$. Find the maximal bottleneck $B_i$ of $G_i$ and let $G_{i+1}$ be the induced subgraph on the vertex set $V_{i+1}=V_i- \left(B_i\cup C_i\right)$, where $C_i=\Gamma(B_i)\cap V_i$, the neighbor set of $B_i$ in the subgraph $G_i$. Repeat if $G_{i+1}\neq \emptyset$ and set $k=i$ if $G_{i+1}=\emptyset$. Then we call $\mathcal{B}=\{(B_1,C_1),\cdots,(B_k,C_k)\}$ the bottleneck decomposition of $G$, $\alpha_i=\frac{w(C_i)}{w(B_i)}$ the $i$-th $\alpha$-ratio and $(\alpha_i)_{i=1}^{k}$ the $\alpha$-ratio vector.
\end{definition}

We propose an example in the following to show the bottleneck decomposition of a graph. Consider the graph of Fig. \ref{example1} which has 6 vertices. The numbers in each circle represents the weight of each vertex. At the first step, $G_1=G$, $V_1=V$ and the maximal bottleneck pair of $G_1$ is
$(B_1,C_1)=(\{v_1,v_2\},\{v_3,v_4\})$ with $\alpha_1=\frac12$. After removing $B_1\cup C_1$, $V_2=\{v_5,v_6\}$, $G_2=G[V_2]$ and the maximal bottleneck pair of $G_2$ is $(B_2,C_2)=(\{v_5,v_6\},\{v_5,v_6\})$ with $\alpha_2=1$. Therefore the bottleneck decomposition is $\mathcal{B}=\{(\{v_1,v_2\},\{v_3,v_4\}),(\{v_5,v_6\},\{v_5,v_6\})\}$.

\begin{figure}[ht]
\centering
\includegraphics[height=2.8cm]{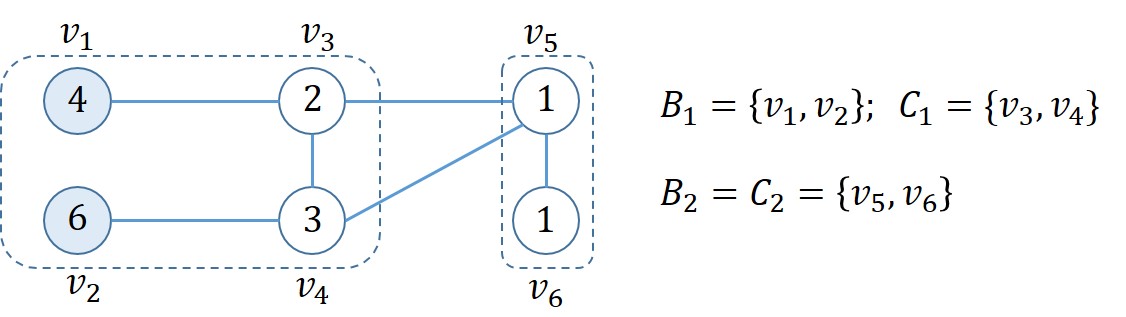}
\caption{\small The figure illustrates bottleneck decomposition of a network.}
\label{example1}
\end{figure}

The bottleneck decomposition has some combinatorial properties which are very crucial to the study on the truthfulness of BD Mechanism in \cite{CDPY} and \cite{CDQY} and to the discussion of the fairness of resource allocation. 
Although Wu and Zhang mentioned these properties in \cite{WZ}, their proofs are not included.
%We explain these properties here and present the proofs in full paper.
We explain these properties here and present the proofs in the Appendix.

\begin{proposition}\label{prop2}
  Given an undirected and connected graph $G=(V,E;w)$, the bottleneck decomposition $\mathcal{B}$ of $G$ satisfies\\
  \noindent (1) $0< \alpha_1<\alpha_2<\cdots<\alpha_k\leq 1$;\\
  \noindent (2) if $\alpha_i=1$, then $i=k$ and $B_k=C_k$; otherwise $B_i$ is independent and $B_i\cap C_i=\emptyset$.
\end{proposition}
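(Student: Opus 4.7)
The plan is to handle both parts by contradiction, exploiting the two defining features of each pair $(B_i, C_i)$: minimality of the $\alpha$-ratio and maximality of $B_i$ as a bottleneck of $G_i$.

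For part (1), I would first dispatch the endpoints: $\alpha_1 > 0$ follows from connectivity of $G$, which guarantees that every nonempty subset has a nonempty neighborhood; and $\alpha_k \leq 1$ follows by testing $S = V_k$, since $\Gamma_{G_k}(V_k) \subseteq V_k$. The strict monotonicity $\alpha_i < \alpha_{i+1}$ is the main content. Assuming toward contradiction that $\alpha_{i+1} \leq \alpha_i$, I form $B := B_i \cup B_{i+1}$ inside $V_i$; the two pieces are disjoint since $B_{i+1} \subseteq V_{i+1} = V_i \setminus (B_i \cup C_i)$. The key structural point is that $\Gamma_{G_i}(B_{i+1}) \subseteq C_i \cup C_{i+1}$: any $G_i$-neighbor of $B_{i+1}$ lying in $V_{i+1}$ belongs to $C_{i+1}$, while one lying in $V_i \setminus V_{i+1} = B_i \cup C_i$ cannot be in $B_i$ (that would force $B_{i+1}$ to meet $C_i$), hence must lie in $C_i$. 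A weighted-average estimate then gives $\alpha_{G_i}(B) \leq \alpha_i$, so $B$ is itself a bottleneck of $G_i$ strictly containing $B_i$, contradicting maximality.

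For part (2), the case $\alpha_i = 1$ is quick: monotonicity together with $\alpha_j \leq 1$ for every $j$ forces $i = k$; then $\alpha_{G_k}(V_k) \leq 1 = \alpha_k$ makes $V_k$ itself a bottleneck of $G_k$, so by maximality $B_k = V_k$, and the equality $w(C_k) = w(B_k)$ combined with $C_k \subseteq V_k$ pins down $C_k = B_k$. For the case $\alpha_i < 1$, I plan to argue that $B_i$ must be independent by showing $D := B_i \cap C_i = \emptyset$. Suppose not; a short case split is needed. If $B_i \setminus D = \emptyset$, then $B_i \subseteq C_i$ and hence $\alpha_i = w(C_i)/w(B_i) \geq 1$, a contradiction. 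Otherwise $B' := B_i \setminus D$ is nonempty, and by construction no vertex of $B'$ has a $G_i$-neighbor in $B_i$, which forces $\Gamma_{G_i}(B') \subseteq C_i \setminus D$. A brief algebraic manipulation exploiting $\alpha_i < 1$ then yields
\[
\alpha_{G_i}(B') \;\leq\; \frac{w(C_i) - w(D)}{w(B_i) - w(D)} \;<\; \frac{w(C_i)}{w(B_i)} \;=\; \alpha_i,
\]
contradicting the minimality of $\alpha_i$ as the $\alpha$-ratio of the bottleneck of $G_i$.

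The main obstacle I anticipate is the combinatorial bookkeeping in the monotonicity step—pinning $\Gamma_{G_i}(B_{i+1})$ down to $C_i \cup C_{i+1}$ when $B_{i+1}$ is re-viewed inside the larger graph $G_i$ rather than in $G_{i+1}$—and, in the $\alpha_i < 1$ case, recognising that the sole obstruction to the strict drop $\alpha_{G_i}(B') < \alpha_i$ is the degenerate configuration $B_i \subseteq C_i$, which itself is ruled out precisely by the hypothesis $\alpha_i < 1$.
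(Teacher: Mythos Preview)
Your proposal is correct and follows essentially the same route as the paper: the union $B_i\cup B_{i+1}$ and the weighted-average bound for strict monotonicity, and the set $B_i\setminus(B_i\cap C_i)$ for the independence claim when $\alpha_i<1$, each yielding the same contradiction with minimality or maximality of the bottleneck. The only cosmetic difference is that the paper secures $\alpha_i>0$ for all $i$ by first proving directly that no $G_i$ contains an isolated vertex, whereas you obtain it from $\alpha_1>0$ together with the strict monotonicity you establish; both arguments are valid.
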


\section{Computation of Bottleneck Decomposition}\label{sec3}

In this section, we propose a polynomial time algorithm to compute the bottleneck decomposition of any given network.
Without loss of generality, we assume that all weights of vertices are positive integers and are bounded by $U>0$. From Definition \ref{BD}, it is not hard to see the key to the bottleneck decomposition is the computation of the maximal bottleneck of each subgraph. But how to find the maximal bottleneck among all of $2^{|V|}$ subsets efficiently is a big challenge. To figure it out, our algorithm comprises two phases on each subgraph. In the first phase, we shall compute the minimal $\alpha$-ratio $\alpha^*$ of the current subgraph and find the maximal bottleneck with the minimal $\alpha$-ratio in the second phase.
In the subsequent two subsections, we shall introduce the algorithms in each phase detailedly and analyze them respectively.

\subsection{Evaluating the Minimal $\alpha$-ratio $\alpha^*$}
In this phase to evaluate the minimal $\alpha$-ratio, the main idea of our algorithm is to find the $\alpha$-ratio by binary search approach iteratively. To reach the minimal ratio, we construct a corresponding network with a parameter $\alpha$ and adjust $\alpha$ by applying the maximum flow algorithm until a certain condition is satisfied. Before proceeding the algorithm to compute the minimal $\alpha$-ratio, some definitions and lemmas are necessary.

Given a graph $G=(V,E;w)$ and a parameter $\alpha$, a network $N(G,\alpha)$ (as shown in Fig. \ref{network}-(b)) based on $G$ and $\alpha$ is constructed as:\\
\noindent $\bullet$  $V_N=\{s,t\}\cup V\cup \widetilde{V}$, where $s$ is the source, $t$ is the sink and $\widetilde{V}$ is the copy of $V$;\\
\noindent $\bullet$ the directed edge set $E_N$ comprises:\\
  \qquad - a directed edge $(s,v)\in E_N$ from source $s$ to $v$ with capacity of $\alpha w_v$, $\forall v\in V$;\\
  - a directed edge $(\widetilde{v},t)\in E_N$ from $\widetilde{v}$ to sink $t$ with capacity of $w_v$, $\forall \widetilde{v}\in \widetilde{V}$;\\
  - two directed edges $(u,\widetilde{v})\in E_N$ and $(v,\widetilde{u})\in E_N$ with capacity of $\infty$, $\forall (u,v)\in E$.

\begin{figure}[ht]
\centering
\includegraphics[height=3.5cm]{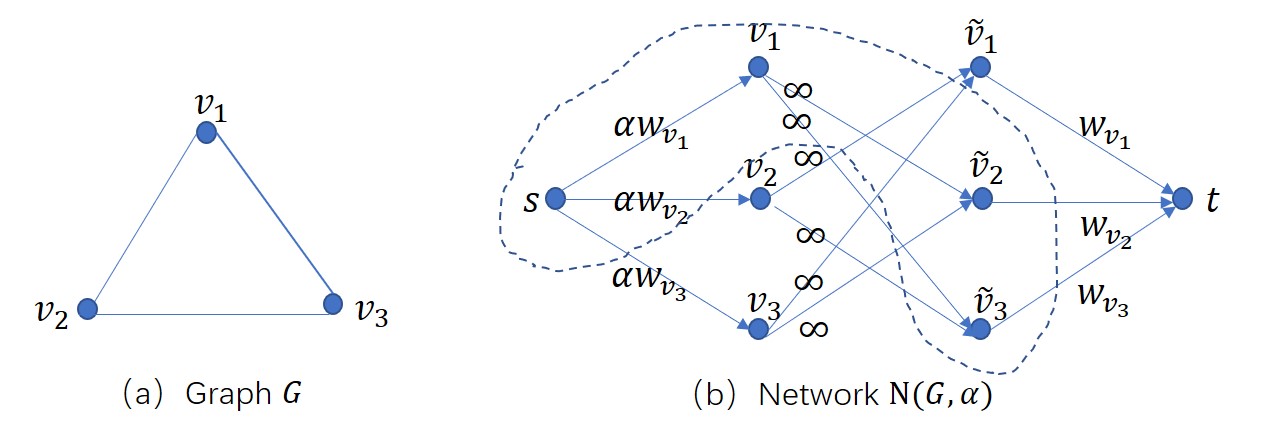}
\caption{\small The illustration of network $N(G,\alpha)$.}\label{network}
\end{figure}

In $N(G,\alpha)$, let $\widetilde{B}=\{\widetilde{v}|v\in B\}$ and $\Gamma(\widetilde{B})=\{\widetilde{u}|u\in \Gamma(B)\}$, where the latter is called the neighborhood of $\widetilde{B}$.

%\begin{definition}[\textbf{$s$-$t$ Cut, Cut Capacity and the Minimum Cut}]
% Given a network $N$. An $s$-$t$ cut $(S,T)$ is a partition of $V_N$ such that $s\in S$ and $t\in T$. The capacity of an $s$-$t$ cut is
% \begin{eqnarray*}
%   cap(S,T)=\sum_{(u,\widetilde{v})\in E_N, u\in S,~ \widetilde{v}\in T}c_{u\widetilde{v}},
% \end{eqnarray*}
% where $c_{u\widetilde{v}}$ is the capacity of edge $(u,\widetilde{v})$. An $s$-$t$ cut is called the minimum cut, if its capacity is minimal among all $s$-$t$ cuts.
%\end{definition}

\begin{lemma}\label{lemma1}
  For any $s$-$t$ cut $(S,T)$ in network $N(G,\alpha)$, the capacity of cut $(S,T)$ is finite if and only if $S$ has the form as $\{s\}\cup B\cup \Gamma(\widetilde{B})$ (as shown in Fig. \ref{network}-(b)) for any subset $B\subseteq V$ and its capacity is $\alpha(w(V)-w(B))+w(\Gamma(B))$.
\end{lemma}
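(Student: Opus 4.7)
The plan is to prove both directions by a direct inspection of which edges cross the cut $(S,T)$ in the network $N(G,\alpha)$, exploiting the key structural feature that the only infinite-capacity edges are those of the form $u \to \widetilde{v}$ with $(u,v) \in E$. The only potential sources of infinity are these directed edges, so ``finite cut'' will translate exactly into a combinatorial closure condition on $S$.

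For the forward direction, I would assume $S = \{s\} \cup B \cup \Gamma(\widetilde{B})$ and enumerate the edges that go from $S$ to $T$. The source-side edges $(s,v)$ cross precisely when $v \in V \setminus B$, contributing $\sum_{v \notin B} \alpha w_v = \alpha(w(V) - w(B))$. The sink-side edges $(\widetilde{v},t)$ cross precisely when $\widetilde{v} \in \Gamma(\widetilde{B})$, contributing $w(\Gamma(B))$. The infinite-capacity edges $u \to \widetilde{v}$ with $(u,v) \in E$ do \emph{not} cross: if $u \in B \subseteq S$, then $v \in \Gamma(B)$ forces $\widetilde{v} \in \Gamma(\widetilde{B}) \subseteq S$; if $u \notin B$, then $u \notin S$ to begin with (noting that only $V$-vertices in $S$ can be tails of these directed edges). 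Summing up yields exactly $\alpha(w(V) - w(B)) + w(\Gamma(B))$, which is finite.

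For the reverse direction, suppose $(S,T)$ has finite capacity, and set $B := S \cap V$. Trivially $s \in S$. For any $u \in B$ and $v \in \Gamma(u)$, the edge $u \to \widetilde{v}$ has infinite capacity, so finiteness of the cut forces $\widetilde{v} \in S$; hence $\Gamma(\widetilde{B}) \subseteq S \cap \widetilde{V}$. The one subtlety is that in principle $S \cap \widetilde{V}$ could contain extra copies $\widetilde{v}$ with $v \notin \Gamma(B)$, but any such $\widetilde{v}$ contributes $w_v$ to the cut via the edge $(\widetilde{v}, t)$ while removing nothing, so to realize the canonical form (and the capacity formula claimed) we may take $S \cap \widetilde{V} = \Gamma(\widetilde{B})$ without loss of generality; this is the reading consistent with the claim. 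Substituting $S = \{s\} \cup B \cup \Gamma(\widetilde{B})$ back into the capacity computation from the forward direction yields the formula $\alpha(w(V) - w(B)) + w(\Gamma(B))$.

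I do not expect any serious obstacle: the argument is essentially bookkeeping that uses the bipartite-like structure of $N(G,\alpha)$ (the only directed edges within $V \cup \widetilde{V}$ go from $V$ to $\widetilde{V}$, so the closure condition is one-sided) together with the observation that only finitely many edge-types can cross a finite cut. The only point requiring care is pinning down the ``form'' of $S$ on the $\widetilde{V}$-side so that the stated capacity formula matches exactly; this I would flag explicitly and handle by the canonical-form reduction above.
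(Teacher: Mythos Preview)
Your proposal is correct and follows essentially the same approach as the paper: both argue by directly enumerating which edges cross the cut and using the infinite-capacity edges to force the closure condition $\Gamma(\widetilde{B})\subseteq S$. In fact you are more careful than the paper on the converse direction: the paper simply asserts that any $S$ not of the form $\{s\}\cup B\cup\Gamma(\widetilde{B})$ must cut an infinite edge, whereas you correctly flag that $S\cap\widetilde{V}$ could in principle strictly contain $\Gamma(\widetilde{B})$ while the cut remains finite, and you resolve this by the canonical-form reduction (extra $\widetilde{v}$'s only add to the capacity, so they may be dropped without loss).
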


\begin{proof}
  First, if $S=\{s\}\cup B\cup \Gamma(\widetilde{B})$, then there are two kinds of directed edges in cut $(S,T)$:\\
  \noindent $\bullet $ edge $(s,u)$ for any $u\not\in B$ and the total capacity of these edges is $\alpha(w(V)-w(B))$;\\
  \noindent $\bullet$ edge $(\widetilde{u},t)$ for any $\widetilde{u}\in \Gamma(\widetilde{B})$ and the total capacity of these edges is $w(\Gamma(B))$.\\
 So the capacity of cut $(S,T)$ with $S=\{s\}\cup B\cup \Gamma(\widetilde{B})$ is finite and its capacity is $\alpha(w(V)-w(B))+w(\Gamma(B))$.
 Conversely, if $S\neq\{s\}\cup B\cup \Gamma(\widetilde{B})$, then cut $(S,T)$ must contain at least one edge with infinite capacity in the form as $(v,\widetilde{u})$. At this time the capacity of cut $(S,T)$ is infinite. It completes this claim.
 $\square$
\end{proof}

 Lemma \ref{lemma1} tells that if a cut $(S,T)$ in $N(G,\alpha)$ has a finite capacity, set $S$ must has the form as $S=\{s\}\cup B\cup \Gamma(\widetilde{B})$.
 It is not hard to see that such a finite cut corresponds to a subset $B\subseteq V$. Thus we name $B$ as the \emph{corresponding set} of cut $(S,T)$.
 Let $cap(S,T)$ be the capacity of cut $(S,T)$ and $cap(G,\alpha)$ denote the minimum capacity of network $N(G,\alpha)$, that is $cap(G,\alpha)=\min_{(S,T)}cap(S,T)$. To compute the minimal $\alpha$-ratio, we are more interested in the relationship between the current parameter $\alpha$ and the minimal ratio $\alpha^*$.

 \begin{lemma}\label{lemma2}
   Given a graph $G$ and a parameter $\alpha$.
 Let $\alpha^*$ be the minimal $\alpha$-ratio in $G$ and $cap(G,\alpha)$ be the minimum cut capacity of network $N(G,\alpha)$. Then\\
\noindent (1) $cap(G,\alpha)<\alpha w(V)$, if and only if $\alpha>\alpha^*$;\\
\noindent (2) $cap(G,\alpha)=\alpha w(V)$, if and only if $\alpha=\alpha^*$;\\
\noindent (3) $cap(G,\alpha)>\alpha w(V)$, if and only if $\alpha<\alpha^*$.
 \end{lemma}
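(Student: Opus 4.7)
The plan is to use Lemma~\ref{lemma1} to parametrize every finite-capacity $s$-$t$ cut of $N(G,\alpha)$ by a subset $B\subseteq V$, and then reduce the three claims to a sign analysis of $\alpha(B)-\alpha$. Starting from Lemma~\ref{lemma1}, I would rewrite the capacity of such a cut as
$$\alpha\bigl(w(V)-w(B)\bigr)+w(\Gamma(B))\;=\;\alpha\,w(V)+w(B)\bigl(\alpha(B)-\alpha\bigr),$$
so that comparing $cap(G,\alpha)$ with $\alpha\,w(V)$ is equivalent to asking whether $\min_{B\neq\emptyset} w(B)\bigl(\alpha(B)-\alpha\bigr)$ is negative, zero, or positive. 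Since all vertex weights are positive and $\alpha(B)\geq\alpha^*$ for every nonempty $B$, with equality attained at any bottleneck of $G$, the sign is governed entirely by the comparison between $\alpha$ and $\alpha^*$.

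Next I would dispatch the three parts as matching implications and rely on trichotomy to close the ``only if'' directions cheaply. For part~(1), $\alpha>\alpha^*$: plugging in $B=B^*$, a bottleneck, yields a cut of capacity $\alpha\,w(V)-w(B^*)(\alpha-\alpha^*)<\alpha\,w(V)$; conversely, if $cap(G,\alpha)<\alpha\,w(V)$ then some nonempty $B$ must satisfy $\alpha(B)<\alpha$, so $\alpha>\alpha(B)\geq\alpha^*$. For part~(2), $\alpha=\alpha^*$: every nonempty $B$ has $\alpha(B)\geq\alpha^*=\alpha$, so every cut has capacity at least $\alpha\,w(V)$, with equality attained at $B=B^*$. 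For part~(3), $\alpha<\alpha^*$: every nonempty $B$ obeys $\alpha(B)\geq\alpha^*>\alpha$, so every cut capacity strictly exceeds $\alpha\,w(V)$. The remaining ``only if'' directions in (2) and (3) then follow immediately since the three premises $\alpha>\alpha^*,\alpha=\alpha^*,\alpha<\alpha^*$ exhaust the real line.

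The one subtle point I expect to have to be careful about is the trivial cut corresponding to $B=\emptyset$, whose capacity is exactly $\alpha\,w(V)$: if it were kept inside the minimum defining $cap(G,\alpha)$, part~(3) would be vacuous. The argument therefore relies on the implicit convention, already visible in the way Lemma~\ref{lemma1} introduces ``corresponding sets,'' that the minimum is taken over nontrivial cuts, i.e.\ over $B\neq\emptyset$. Once that convention is in place, no genuinely difficult step remains; the whole proof is a short algebraic consequence of Lemma~\ref{lemma1} together with the definition $\alpha^*=\min_{S\neq\emptyset}\alpha(S)$.
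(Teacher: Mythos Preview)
Your proposal is correct and follows essentially the same route as the paper's own proof in the Appendix: both invoke Lemma~\ref{lemma1} to rewrite every finite-capacity cut as $\alpha\,w(V)+w(B)\bigl(\alpha(B)-\alpha\bigr)$, then compare $\alpha$ with $\alpha^*$ and plug in a bottleneck $B'$ (your $B^*$) to witness the tight cases, with trichotomy closing the remaining directions. Your caveat about the trivial cut $B=\emptyset$ is well taken---the paper's argument for part~(3) silently assumes $w(B)>0$ when it asserts the strict inequality, so you have actually been more careful than the source on this point.
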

The proof can be found in the Appendix.
From Lemma \ref{lemma2}-(2), the parameter $\alpha$ is equal to $\alpha^*$, if and only if the corresponding set $B$ of the minimum cut $(S,T)$ in $N(G,\alpha)$ has its $\alpha$-ratio equal to $\alpha$, that is $\frac{w(\Gamma(B))}{w(B)}=\alpha=\alpha^*$. Thus we have the following corollary.

\begin{corollary}\label{coro1}
  For the network $N(G,\alpha)$, if the minimum cut $(S,T)$'s corresponding set $B$ has its $\alpha$-ratio equal to $\alpha$, i.e., $\frac{w(\Gamma(B))}{w(B)}=\alpha$, then $\alpha=\alpha^*$.
\end{corollary}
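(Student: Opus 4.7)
The plan is to derive the corollary as a direct consequence of Lemma~\ref{lemma1} and Lemma~\ref{lemma2}-(2), using the hypothesis to force the minimum cut capacity to take the borderline value $\alpha w(V)$.

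First I would apply Lemma~\ref{lemma1} to the given minimum cut $(S,T)$ with corresponding set $B$. By that lemma, the capacity must take the finite form
\[
cap(S,T) \;=\; \alpha\bigl(w(V)-w(B)\bigr) + w(\Gamma(B)).
\]
Next I would plug in the hypothesis $w(\Gamma(B)) = \alpha\, w(B)$ and simplify: the two terms collapse to $\alpha w(V) - \alpha w(B) + \alpha w(B) = \alpha w(V)$. Since $(S,T)$ is a \emph{minimum} cut of $N(G,\alpha)$, this value equals $cap(G,\alpha)$; that is, $cap(G,\alpha) = \alpha w(V)$.

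Finally I would invoke Lemma~\ref{lemma2}-(2), which states that $cap(G,\alpha) = \alpha w(V)$ holds if and only if $\alpha = \alpha^*$. Thus $\alpha = \alpha^*$, establishing the corollary.

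There is essentially no technical obstacle here; the corollary is a one-line specialization of Lemma~\ref{lemma2}. The only subtlety worth being careful about is that one must first verify $cap(S,T)$ is finite (so that $B$ is indeed a \emph{corresponding set} in the sense of Lemma~\ref{lemma1}), but this is immediate from the hypothesis, since a minimum cut in a network with at least one finite cut (for instance, the trivial cut $S=\{s\}$) must itself be finite.
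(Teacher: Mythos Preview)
Your proposal is correct and matches the paper's own justification: the corollary is stated as an immediate consequence of Lemma~\ref{lemma2}-(2), and your argument simply spells out the one-line computation (via Lemma~\ref{lemma1}) showing that the hypothesis forces $cap(G,\alpha)=\alpha w(V)$.
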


Based on Lemma~\ref{lemma1} and~\ref{lemma2} and Corollary \ref{coro1}, following Algorithm A is derived to compute the minimal $\alpha$-ratio for any given graph $G=(V,E;w)$.

\begin{table}[htbp]
\caption{Algorithm A: Compute the minimal $\alpha$-ratio of $G$}
\label{alg1}
\small
\centering
\begin{tabular}{l}
\hline
\textbf{Input}: Graph $G=(V,E;w)$\\
\textbf{Output}: The minimal $\alpha$-ratio $\alpha^*$ of $G$.\\
\hline
1: \textbf{Set} $a:=0$, $b:=1$ and $M:=w^2(V)$;\\
2: \textbf{Set} $\alpha:=\frac{1}2(a+b)$;\\
3: \textbf{Constrcut} network $N(G,\alpha)$;\\
4: \textbf{Compute} minimum s-t cut $(S,T)$ with its capacity $cap(G,\alpha)$ by Edmonds-Karp \\ 
~~~~algorithm and obtain the corresponding subset $B\subseteq V$;\\
5: \textbf{If} $|\frac{w(\Gamma(B))}{w(B)}-\alpha|<\frac1{M}$\\
    \hspace{0.35in} \textbf{Output} $\alpha^*=\frac{w(\Gamma(B))}{w(B)}$;\\
6: \textbf{Else} \\
7: \hspace{0.2in} \textbf{If} $cap(G,\alpha)>\alpha w(V)$ \\
    \hspace{0.6in} \textbf{Set} $a:=\alpha$, $b:=b$ and turn to line 2;\\
8: \hspace{0.2in} \textbf{If} $cap(G,\alpha)<\alpha w(V)$\\
    \hspace{0.6in} \textbf{Set} $a:=a$, $b:=\alpha$ and turn to line 2.\\
\hline
\end{tabular}
\end{table}

The main idea of Algorithm A is to find $\alpha^*$ by binary search approach. The initial range of $\alpha^*$ is set as $[0,1]$, since $0< \alpha^*\leq 1$ by Proposition \ref{prop2}-(1). And in each iteration, we split the range in half by comparing the minimum capacity $cap(G,\alpha)$ and the value of $\alpha w(V)$. Because all weights of vertices are positive integers, each subset's $\alpha$-ratio is a rational number and the difference of any two different $\alpha$-ratios should be greater than $\frac1{w^2(V)}$. So once $\left|\frac{w(\Gamma(B))}{w(B)}-\alpha\right|<\frac{1}{w^2(V)}$, we can conclude $\frac{w(\Gamma(B))}{w(B)}=\alpha$ and Corollary \ref{coro1} makes us get $\frac{w(\Gamma(B))}{w(B)}=\alpha=\alpha^*$. Thus we set the terminal condition of of Algorithm A as $\left|\frac{w(\Gamma(B))}{w(B)}-\alpha\right|<\frac{1}{w^2(V)}$.

There is only one loop in Algorithm A. 
After each round, the length of the search range $[a,b]$ is cut in half. 
Thus the loop ends within $O(log(M))$ rounds, that is $O(log(w(V)))$. 
In each round, the critical calculation is on line 4 to compute the minimum $s$-$t$ cut of a given network. 
Applying the famous \emph{max-flow min-cut theorem}\cite{PS}, it is equivalent to compute the corresponding maximum flow. 
There are several polynomial-time algorithms to find maximum flow, for instance \emph{Edmonds-Karp algorithm}\cite{EK} with time complexity $O(|V||E|^2)$. 
In Algorithm A, we call the Edmonds-Karp algorithm to compute the minimum capacity and have the following theorem.

\begin{theorem}\label{theo1}
The minimal $\alpha$-ratio $\alpha^*$ of $G$ can be computed in $O(|V||E|^2log(w(V)))$ time.
\end{theorem}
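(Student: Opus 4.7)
The plan is to decompose the analysis into two parts: the number of binary-search iterations, and the per-iteration cost.

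For the iteration bound, I would first verify that Algorithm~A maintains the invariant $\alpha^*\in[a,b]$. The initial interval $[0,1]$ contains $\alpha^*$ by Proposition~\ref{prop2}-(1), and the updates on lines~7 and~8 follow immediately from Lemma~\ref{lemma2}: $cap(G,\alpha)>\alpha w(V)$ forces $\alpha<\alpha^*$ so raising $a$ to $\alpha$ is safe, and symmetrically for lowering $b$. Each round halves $b-a$, so after $O(\log M)=O(\log w(V))$ rounds the interval has length below $1/M$, whence $|\alpha-\alpha^*|<1/(2M)$.

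To conclude termination and correctness, I would combine this with the rational-gap observation: every $\alpha$-ratio is of the form $p/q$ with integer numerator and denominator at most $w(V)$, so any two distinct $\alpha$-ratios differ by at least $1/w^2(V)=1/M$. Thus once $|\alpha-\alpha^*|<1/M$, no non-bottleneck set $B'$ (which would satisfy $\alpha(B')-\alpha^*\geq 1/M$) can out-perform an actual bottleneck in minimising the cut objective $w(B)(\alpha(B)-\alpha)$; the min-cut's corresponding $B$ is therefore a bottleneck, so $\alpha(B)=\alpha^*$, the terminal test on line~5 fires, and Corollary~\ref{coro1} certifies that the output $\alpha(B)$ equals $\alpha^*$.

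For the per-iteration cost, I would note that $N(G,\alpha)$ has $|V_N|=2|V|+2$ nodes and $|E_N|=2|V|+2|E|$ directed edges; since $G$ is connected, $|E|\geq|V|-1$, so $|V_N|=O(|V|)$ and $|E_N|=O(|E|)$. The dominant work is the Edmonds-Karp minimum-cut computation at $O(|V_N|\cdot|E_N|^2)=O(|V||E|^2)$. Network construction, extraction of $B$ from the cut, and evaluation of line~5 are each $O(|V|+|E|)$ and dominated. Multiplying by the $O(\log w(V))$ iteration count yields the claim.

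The main obstacle is not the arithmetic count but the subtle correctness step: one must carefully argue that once $|\alpha-\alpha^*|<1/M$ the min-cut's corresponding set $B$ is forced to be a bottleneck rather than some other low-$\alpha$-ratio set. The cleanest way is the quantitative comparison sketched above between the cut objective of a bottleneck and that of any non-bottleneck, leveraging both the rational-gap separation and the binary-search invariant.
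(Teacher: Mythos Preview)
Your overall plan---binary search over $[0,1]$ maintaining $\alpha^*\in[a,b]$, halving the interval each round via Lemma~\ref{lemma2}, and invoking Edmonds--Karp on $N(G,\alpha)$ per round---is exactly the paper's argument, and your running-time arithmetic (in particular the observation that $|V_N|=O(|V|)$ and $|E_N|=O(|E|)$ for connected $G$) is correct and in fact more explicit than the paper's.

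Where you go beyond the paper is the correctness step, and here your sketch has a small gap. The claim that ``once $|\alpha-\alpha^*|<1/M$ the min-cut's corresponding $B$ is forced to be a bottleneck'' is clean only when $\alpha\geq\alpha^*$: then a bottleneck gives nonpositive objective $w(B^*)(\alpha^*-\alpha)\leq 0$, while every non-bottleneck $B'$ satisfies $\alpha(B')\geq\alpha^*+1/M>\alpha$ and hence has strictly positive objective. When $\alpha<\alpha^*$, however, all objectives $w(B)(\alpha(B)-\alpha)$ are positive, and a small-weight non-bottleneck can undercut a large-weight bottleneck, so the min-cut's $B$ need not be a bottleneck. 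The repair is easy: in that sub-case, if the min-cut's $B$ is a non-bottleneck then $\alpha(B)-\alpha\geq(\alpha(B)-\alpha^*)\geq 1/M$, so line~5 does \emph{not} fire and the bisection simply continues; one then checks that a round with $\alpha>\alpha^*$ must occur within $O(\log w(V))$ further steps (using that $b>\alpha^*$ strictly throughout). The paper's own justification of line~5 is at least as loose---it asserts ``$\frac{w(\Gamma(B))}{w(B)}=\alpha$'' from the gap bound, which cannot be literally true since $\alpha$ is dyadic---so your treatment is already an improvement; it just needs the $\alpha<\alpha^*$ branch handled separately.
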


\subsection{Finding the Maximal Bottleneck}

In the previous phase, we compute the minimal $\alpha$-ratio $\alpha^*$ by Algorithm A and the corresponding bottleneck $B'$ also can be obtained. 
But it is possible that the bottleneck $B'$ may not be maximal. 
So in this phase, we continue to find the maximal bottleneck given the minimal $\alpha$-ratio of $\alpha^*$.

Here we introduce another network, denoted by $N(G,\alpha^*,\epsilon)$, for a given parameter $\epsilon>0$.
To obtain network $N(G,\alpha^*,\epsilon)$, we first construct network $N(G,\alpha^*)$, defined in Section 3.1, and then increase the capacity of edge $(s,v)$, for any $v\in V$, from $\alpha^*w_v$ to $\alpha^*w_v+\epsilon$. 
By similar proof for Lemma \ref{lemma1}, we know a cut $(S,T)$ in $N(G,\alpha^*,\epsilon)$ has a finite capacity, if and only if $S$ has the form as $S=\{s\}\cup B\cup \Gamma(\widetilde{B})$, where $B\subseteq V$.
And the corresponding capacity becomes
\begin{eqnarray}
  cap(S,T)&=&\alpha^*(w(V)-w(B))+(n-|B|)\cdot\epsilon+w(\Gamma(B))\nonumber\\
  &=&\alpha^*w(V)-w(B)(\alpha^*-\frac{w(\Gamma(B))}{w(B)})+(n-|B|)\cdot\epsilon\label{eqn4}.
\end{eqnarray}

From (\ref{eqn4}), the capacity $cap(S,T)$ actually depends on its corresponding set $B$ and the parameter $\epsilon$. Thus we can view it as a function of $B$ and $\epsilon$. To simplify our discussion in this section, we use $cap(B,\epsilon)$, different from the notation in the previous subsection, to represent the capacity function of cut $(S,T)$ where $S=\{s\}\cup B\cup \Gamma(\widetilde{B})$.

\begin{lemma}\label{lemma3}
Given a graph $G$, let $B^*$ be the maximal bottleneck of $G$. For any $\epsilon>0$, the maximal bottleneck $B^*$ satisfies $cap(B^*,\epsilon)=\min_{B:\alpha(B)=\alpha^*}cap(B,\epsilon)$ in network $N(G,\alpha^*,\epsilon)$.
\end{lemma}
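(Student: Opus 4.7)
The plan is to exploit the formula for $cap(B,\epsilon)$ stated just before the lemma, namely
\[
cap(B,\epsilon) = \alpha^* w(V) - w(B)\bigl(\alpha^* - \alpha(B)\bigr) + (n-|B|)\epsilon.
\]
When $B$ is restricted to bottlenecks, i.e.\ $\alpha(B)=\alpha^*$, the middle term vanishes and $cap(B,\epsilon)$ collapses to $\alpha^* w(V) + (n-|B|)\epsilon$, which (for fixed $\epsilon>0$) is strictly decreasing in $|B|$. Hence minimizing $cap(\cdot,\epsilon)$ over the set $\{B:\alpha(B)=\alpha^*\}$ is equivalent to maximizing $|B|$ over the same family, and the lemma reduces to showing that the maximal bottleneck $B^*$ has the largest cardinality among all bottlenecks.

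To establish this, I would first show that the family of bottlenecks is closed under union: if $B_1$ and $B_2$ are bottlenecks, then so is $B_1\cup B_2$. The argument uses submodularity of the neighborhood-weight function $S\mapsto w(\Gamma(S))$, which follows from $\Gamma(S_1\cup S_2)=\Gamma(S_1)\cup\Gamma(S_2)$ together with $\Gamma(S_1\cap S_2)\subseteq \Gamma(S_1)\cap\Gamma(S_2)$, combined with the modularity of $S\mapsto w(S)$. Summing $w(\Gamma(B_i))=\alpha^* w(B_i)$ for $i=1,2$ and applying these (in)equalities yields $w(\Gamma(B_1\cup B_2)) + w(\Gamma(B_1\cap B_2)) \leq \alpha^*\bigl(w(B_1\cup B_2)+w(B_1\cap B_2)\bigr)$. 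Since $\alpha^*$ is the minimum $\alpha$-ratio, each nonempty set $S$ satisfies $w(\Gamma(S))\geq \alpha^*w(S)$, so both $B_1\cup B_2$ and $B_1\cap B_2$ (when nonempty) must achieve equality and are themselves bottlenecks.

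Given this closure, I argue by contradiction: if some bottleneck $B$ were not contained in $B^*$, then $B\cup B^*$ would be a bottleneck strictly larger than $B^*$, violating the maximality of $B^*$ in Definition~\ref{MB}. Therefore every bottleneck $B$ satisfies $B\subseteq B^*$, whence $|B^*|\geq |B|$. Feeding this back into the simplified capacity formula gives $cap(B^*,\epsilon)\leq cap(B,\epsilon)$ for every bottleneck $B$, which is exactly the claim.

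The main obstacle is the union-closure step; the rest is a one-line plug-in. The delicate point is handling the case $B_1\cap B_2=\emptyset$, where the inequality $w(\Gamma(S))\geq \alpha^*w(S)$ is not available for $S=\emptyset$. This is easily absorbed because both sides of the submodular inequality trivialize in that case (the intersection terms vanish), so the conclusion $\alpha(B_1\cup B_2)=\alpha^*$ still follows directly from $w(\Gamma(B_1\cup B_2))\leq w(\Gamma(B_1))+w(\Gamma(B_2))=\alpha^*(w(B_1)+w(B_2))=\alpha^*w(B_1\cup B_2)$.
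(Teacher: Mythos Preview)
Your argument is correct and shares the same core with the paper's proof: once $\alpha(B)=\alpha^*$, the formula collapses to $cap(B,\epsilon)=\alpha^*w(V)+(n-|B|)\epsilon$, so minimizing over bottlenecks amounts to maximizing $|B|$, and $B^*$ wins. The paper's proof stops right there, invoking the remark after Definition~\ref{MB} (and the uniqueness result cited from~\cite{WZ}) that the maximal bottleneck is precisely the bottleneck of largest size. You instead supply a self-contained justification of that fact via the union-closure of bottlenecks, obtained from submodularity of $S\mapsto w(\Gamma(S))$; this is more than the paper proves but is exactly the standard way to establish the uniqueness/maximum-size property that the paper is taking for granted. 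So the route is essentially the same, with your version filling in a step the paper outsources to prior work.
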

\begin{proof} For any bottleneck $B$ with $\alpha(B)=\alpha^*$, we know $
cap(B,\varepsilon)
=\alpha^*w(V)+n\cdot\varepsilon-\varepsilon|B|.$
 Thus $cap(B^*,\varepsilon)=\min_{\alpha(B)=\alpha^*}cap(B,\varepsilon)$ since $B^*$ is the maximal bottleneck of $G$.
 $\square$
\end{proof}

In other word, if the minimum cut $\hat{B}$ in $N(G,\alpha^*,\epsilon)$, with proper parameter $\epsilon$, is a bottleneck, then $cap(\hat{B},\epsilon)=\min_{B\subseteq V}cap(B,\epsilon)=\min_{B:\alpha(B)=\alpha^*}cap(B,\epsilon)$, which means $\hat{B}$ is the maximal bottleneck of $G$.
% Therefore Lemma \ref{lemma3} ensures that such a set $\hat{B}$ must be the maximal bottleneck.

\begin{corollary} \label{lemma4}
Given a graph $G$.
  If there is an $\epsilon>0$ such that the corresponding set $\hat{B}$ of the minimum cut in $N(G,\alpha^*,\epsilon)$ is a bottleneck, then $\hat{B}$ is the maximal bottleneck of $G$.
\end{corollary}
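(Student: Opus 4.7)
The plan is to combine Lemma \ref{lemma3} with the uniqueness of the maximal bottleneck (recalled after Definition \ref{MB}) to pin $\hat{B}$ down as exactly $B^*$. First, I would specialize equation (\ref{eqn4}) to bottlenecks: when $\alpha(B) = \alpha^*$, the middle term $w(B)\bigl(\alpha^* - w(\Gamma(B))/w(B)\bigr)$ vanishes and the capacity reduces to $cap(B, \epsilon) = \alpha^* w(V) + (n - |B|)\epsilon$. Hence, for any positive $\epsilon$, among bottlenecks the capacity is a strictly decreasing function of the cardinality $|B|$.

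Second, I would apply the minimum-cut hypothesis against the comparison set $B^*$. Since $\hat{B}$ corresponds to the minimum cut, $cap(\hat{B}, \epsilon) \leq cap(B^*, \epsilon)$. Because $\hat{B}$ is a bottleneck by hypothesis and $B^*$ is a bottleneck by definition, step one turns this into the cardinality comparison $|\hat{B}| \geq |B^*|$.

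The main obstacle is upgrading this cardinality inequality to the set-level equality $\hat{B} = B^*$, since ``maximal bottleneck'' is defined via strict set-inclusion rather than size. Here I would exploit the uniqueness of the maximal bottleneck together with the standard fact that bottlenecks are closed under union; this closure follows from $\Gamma(S \cup T) = \Gamma(S) \cup \Gamma(T)$, $\Gamma(S \cap T) \subseteq \Gamma(S) \cap \Gamma(T)$, and the minimality of $\alpha^*$. If $\hat{B}$ were not contained in $B^*$, then $\hat{B} \cup B^*$ would be a bottleneck strictly containing $B^*$, contradicting the maximality of $B^*$. Therefore $\hat{B} \subseteq B^*$, and combining this with $|\hat{B}| \geq |B^*|$ forces $\hat{B} = B^*$, proving the corollary.
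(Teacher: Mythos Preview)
Your proof is correct and follows essentially the same route as the paper: the paper's one-line justification before the corollary observes that since $\hat{B}$ minimizes $cap(\cdot,\epsilon)$ over all subsets and is a bottleneck, it also minimizes over bottlenecks, whence Lemma~\ref{lemma3} identifies it with $B^*$. Your Steps~1--2 reproduce this, and your Step~3 makes explicit (via closure of bottlenecks under union) the passage from $|\hat{B}|\geq |B^*|$ to $\hat{B}=B^*$ that the paper leaves to the cited uniqueness of the maximal bottleneck.
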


Furthermore, once the $\epsilon$ is small enough, the corresponding set of the minimum cut is the maximal bottleneck of $G$, shown in Lemma~\ref{lemma6} whose detailed proof is presented in the Appendix. 
%(We leave the proof in full paper due to space limit.)

\begin{lemma}\label{lemma6}
Given a graph $G$. If $\epsilon\leq \frac1{w^3(V)}$, then the corresponding set $\hat{B}$ of the minimum cut in $N(G,\alpha^*,\epsilon)$ is the maximal bottleneck of $G$.
\end{lemma}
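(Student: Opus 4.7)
The plan is to reduce the statement to Corollary~\ref{lemma4} by showing that, for such a small $\epsilon$, the corresponding set $\hat B$ of the minimum cut in $N(G,\alpha^*,\epsilon)$ must satisfy $\alpha(\hat B)=\alpha^*$; once this is established, Corollary~\ref{lemma4} immediately promotes $\hat B$ to the maximal bottleneck. So the whole task is to rule out $\alpha(\hat B)>\alpha^*$.

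First I would fix any bottleneck $B^*$ of $G$ (for instance the one produced by Algorithm~A in the previous subsection) and use the explicit capacity formula~(\ref{eqn4}) to write
\begin{equation*}
cap(\hat B,\epsilon)-cap(B^*,\epsilon)=(\alpha(\hat B)-\alpha^*)\,w(\hat B)-(|\hat B|-|B^*|)\,\epsilon,
\end{equation*}
using $\alpha(B^*)=\alpha^*$. Since $\hat B$ realizes the minimum cut, the left hand side is $\le 0$, which forces
\begin{equation*}
(\alpha(\hat B)-\alpha^*)\,w(\hat B)\ \le\ (|\hat B|-|B^*|)\,\epsilon.
\end{equation*}
This is the central inequality I would exploit.

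Next I would argue by contradiction: suppose $\alpha(\hat B)>\alpha^*$. Both ratios are rationals written with denominators $w(\hat B)$ and $w(B^*)$ respectively, each at most $w(V)$, so the standard gap bound on distinct fractions yields $\alpha(\hat B)-\alpha^*\ge\frac{1}{w(\hat B)\,w(B^*)}$. Consequently the left hand side above is at least $\frac{1}{w(B^*)}\ge\frac{1}{w(V)}$. For the right hand side I would use the crude estimates $|\hat B|-|B^*|\le n\le w(V)$ (the latter because weights are positive integers) together with the hypothesis $\epsilon\le\frac{1}{w^3(V)}$, giving $(|\hat B|-|B^*|)\epsilon\le\frac{1}{w^2(V)}$. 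Since $\frac{1}{w(V)}>\frac{1}{w^2(V)}$ whenever $w(V)\ge 2$ (the trivial case $w(V)=1$ can be handled separately), this contradicts the inequality, so $\alpha(\hat B)=\alpha^*$ and Corollary~\ref{lemma4} finishes the proof.

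I do not expect a genuine obstacle here; the only delicate point is the arithmetic gap bound on distinct rationals with controlled denominators, which must be invoked cleanly so that a factor of $w(V)$ of slack remains between the $\Omega(1/w(V))$ lower bound on the left and the $O(1/w^2(V))$ upper bound on the right. If one wanted the tightest threshold, $\epsilon<\frac{1}{w^2(V)}$ would already suffice by this argument; the paper's choice $\epsilon\le\frac{1}{w^3(V)}$ is a convenient safe bound that leaves a full factor of $w(V)$ of margin and therefore goes through with no further tightening.
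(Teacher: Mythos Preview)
Your proposal is correct and follows essentially the same route as the paper: compare $cap(\hat B,\epsilon)$ with $cap(B^*,\epsilon)$ via formula~(\ref{eqn4}) to obtain $(\alpha(\hat B)-\alpha^*)\,w(\hat B)\le(|\hat B|-|B^*|)\,\epsilon$, then use the integrality gap on distinct $\alpha$-ratios to force $\alpha(\hat B)=\alpha^*$ and finish with Corollary~\ref{lemma4}. The only cosmetic difference is that the paper divides through by $w(\hat B)$ and bounds $\alpha(\hat B)-\alpha^*<\frac{1}{w^2(V)}$ directly, whereas you multiply the gap bound by $w(\hat B)$ and compare $\frac{1}{w(V)}$ against $\frac{1}{w^2(V)}$; the arithmetic is equivalent.
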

% \begin{proof} Denote $B^*$ to be the maximal bottleneck of $G$. In network $N(G,\alpha^*,\epsilon)$,
% \begin{eqnarray*}
%   \alpha^*w(V)+(\alpha(\hat{B})-\alpha^*)w(\hat{B})+(n-|\hat{B}|)\cdot\epsilon&=&cap(\hat{B},\epsilon)\\
%   &\leq&cap(B^*,\epsilon)=\alpha^*w(V)+(n-|B^*|)\cdot\epsilon.
% \end{eqnarray*}
% The inequality comes from the condition that $cap(\hat{B},\epsilon)$ is the minimum cut capacity in $N(G,\alpha^*,\epsilon)$. So
% $0\leq (\alpha(\hat{B})-\alpha^*)w(\hat{B})\leq \left(|\hat{B}|-|B^*|\right)\cdot\epsilon$
% and
% $$
%   \alpha(\hat{B})-\alpha^*\leq \frac{|\hat{B}|-|B^*|}{w(\hat{B})}\cdot\epsilon
%   \leq\frac{|\hat{B}|-|B^*|}{w(\hat{B})}\cdot\frac{1}{w^3(V)}<\frac{1}{w^2(V)}.
% $$
% As assumed in advance that the weights of all vertices are positive integers, each set's $\alpha$-ratio is a rational number and the difference of any two different $\alpha$-ratios should be great than $\frac1{w(V)^2}$.
% Because $\alpha(\hat{B})-\alpha^*<\frac{1}{w^2(V)}$, we can confirm $\alpha(\hat{B})=\alpha^*$, which means $\hat{B}$ is a bottleneck. In addition, Corollary \ref{lemma4} makes it sure that $\hat{B}$ is the maximal bottleneck of $G$.
% $\square$
% \end{proof}

Based on Lemma~\ref{lemma6}, we propose the following Algorithm B to find the maximal bottleneck of a graph $G$ if its minimal $\alpha$-ration is given beforehand.

\begin{table}[htbp]
\caption{Algorithm B: Find the maximal bottleneck of $G$}
\label{alg2}
\small
\centering
\begin{tabular}{l}
\hline
\textbf{Input}: Graph $G$ and its minimal $\alpha$-ratio $\alpha^*$;\\
\textbf{Output}: The maximal bottleneck $B^*$ of $G$.\\
\hline
1: \textbf{Set} $\epsilon:=\frac1{w^3(V)}$;\\
2: \textbf{Construct} network $N(G,\alpha^*,\epsilon)$;\\
3: \textbf{Compute} the minimum cut capacity $cap(\hat{B},\epsilon)$ by Edmonds-Karp algorithm\\
 ~~~~and obtain the corresponding set $\hat{B}\subseteq V$;\\
4: \textbf{Output} $B^*=\hat{B}$;\\
\hline
\end{tabular}
\end{table}

Lemma \ref{lemma6} guarantees Algorithm B outputs the maximal bottleneck $B^*$ of $G$ correctly. 
The main body of Algorithm B is to compute the minimum cut capacity of network $N(G,\alpha^*,\epsilon)$ which can be realized by \emph{Edmonds-Karp algorithm}. 
So the time complexity of Algorithm B is $O(|V||E|^2)$.

\begin{theorem}\label{theo1.1}
  Algorithm B ouputs the maximal bottleneck of $G$ in $O(|V||E|^2)$ time.
\end{theorem}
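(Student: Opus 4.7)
The proof is essentially a direct invocation of Lemma~\ref{lemma6} together with a straightforward time analysis of Edmonds-Karp on the constructed network, so it splits naturally into a correctness claim and a complexity claim.

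For correctness, my plan is to observe that Algorithm B fixes $\epsilon := \frac{1}{w^3(V)}$, which exactly meets the hypothesis $\epsilon \le \frac{1}{w^3(V)}$ of Lemma~\ref{lemma6}. Algorithm B then constructs $N(G,\alpha^*,\epsilon)$ and computes a minimum $s$-$t$ cut $(S,T)$. Because all edges of the form $(v,\widetilde{u})$ carry infinite capacity, any minimum cut must be finite, so by the structural characterization used in Lemma~\ref{lemma1} (which, as noted in the text, carries over verbatim to $N(G,\alpha^*,\epsilon)$) we get $S = \{s\}\cup \hat{B}\cup\Gamma(\widetilde{\hat{B}})$ for some $\hat{B}\subseteq V$. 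Lemma~\ref{lemma6} then tells us that this $\hat{B}$ is precisely the maximal bottleneck of $G$, which is exactly what Algorithm B returns. No extra bookkeeping is required, since the corresponding set $\hat{B}$ is read off directly from the source side of the minimum cut produced by the max-flow algorithm.

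For the time bound, I would observe that the network $N(G,\alpha^*,\epsilon)$ has $2|V|+2$ vertices and $2|V|+2|E|$ directed edges: one $(s,v)$ edge per $v\in V$, one $(\widetilde{v},t)$ edge per $\widetilde{v}\in\widetilde{V}$, and two oriented copies of each edge of $E$. Thus the vertex and edge counts are $O(|V|)$ and $O(|V|+|E|)$, respectively; since $G$ is connected we have $|E|\ge |V|-1$, so both are $O(|E|)$ on the edge side and $O(|V|)$ on the vertex side. Constructing $N(G,\alpha^*,\epsilon)$ therefore costs $O(|V|+|E|)$. The dominant step is running Edmonds-Karp to compute the minimum cut, which takes $O(|V_N|\cdot|E_N|^2) = O(|V||E|^2)$ time. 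Extracting $\hat{B}$ from the residual graph is a linear-time reachability computation and is absorbed in the same bound.

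The only subtle point in this whole argument is hidden inside Lemma~\ref{lemma6}, namely that the additive perturbation $\epsilon$ is small enough both (a) to keep the minimum cut on the set of true $\alpha^*$-bottlenecks and (b) to make the $-\epsilon|B|$ term from equation~(\ref{eqn4}) a strict tiebreaker in favor of the \emph{largest} such bottleneck. Since that lemma is already available, the present proof is purely a wrapper: correctness is a single application of Lemma~\ref{lemma6}, and the complexity is the textbook bound for Edmonds-Karp on a graph of size $O(|V|+|E|)$, giving the claimed $O(|V||E|^2)$ running time.
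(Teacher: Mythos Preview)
Your proposal is correct and follows essentially the same approach as the paper: correctness is delegated to Lemma~\ref{lemma6}, and the running time is the standard Edmonds--Karp bound on the auxiliary network. The paper's own justification is in fact just the short paragraph preceding the theorem; your write-up simply spells out the vertex and edge counts of $N(G,\alpha^*,\epsilon)$ more explicitly, which is a harmless elaboration rather than a different argument.
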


Applying Algorithm A and B, the maximal bottleneck of any given graph can be computed. Thus by Definition \ref{BD}, we can get the bottleneck decomposition of $G$ by iteratively calling Algorithm A and B on each subgraph. The main result of this paper is:
\begin{theorem}\label{theo2} Given a graph $G=(V,E;w)$,
the bottleneck decomposition of can be computed in $O(n^6log(nU))$ time, where $n=|V|$ and $U=\max_{v\in V}w_v$.
\end{theorem}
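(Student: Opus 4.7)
The plan is to combine Definition~\ref{BD} with Theorems~\ref{theo1} and~\ref{theo1.1} via a straightforward cost-per-iteration times number-of-iterations analysis, so the task is essentially one of bookkeeping rather than of new ideas.

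First I would describe the overall algorithm explicitly: set $G_1:=G$, $V_1:=V$, and in round $i$ apply Algorithm A to $G_i$ to obtain the minimal $\alpha$-ratio $\alpha_i^*$, then feed $(G_i,\alpha_i^*)$ into Algorithm B to obtain the maximal bottleneck $B_i$; finally let $C_i=\Gamma_{G_i}(B_i)$ and recurse on the induced subgraph $G_{i+1}=G_i[V_i\setminus(B_i\cup C_i)]$ until the vertex set is exhausted. By Definition~\ref{BD} and the correctness statements in Theorems~\ref{theo1} and~\ref{theo1.1}, this procedure outputs exactly the bottleneck decomposition $\mathcal{B}=\{(B_1,C_1),\ldots,(B_k,C_k)\}$ of $G$.

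Next I would bound the work of a single round. On $G_i$, Algorithm A runs in $O(|V_i|\,|E_i|^2\log w(V_i))$ time and Algorithm B in $O(|V_i|\,|E_i|^2)$ time, so Algorithm A dominates. Using $|V_i|\le n$, $|E_i|=O(n^2)$, and $w(V_i)\le nU$, the cost of round $i$ is $O(n\cdot n^4\cdot \log(nU)) = O(n^5\log(nU))$.

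Then I would bound the number of rounds $k$. Since $B_i$ is a nonempty bottleneck of $G_i$, each round removes at least one vertex from the current subgraph (in fact $|B_i\cup C_i|\ge 2$ unless $\alpha_i=1$, which by Proposition~\ref{prop2} can occur only in the final round). Hence $k\le n$. Multiplying the per-round bound by $k$ yields a total running time of $O(n\cdot n^5\log(nU))=O(n^6\log(nU))$, as claimed. The only place where one must be slightly careful is to verify that Algorithm A and Algorithm B apply unchanged to the possibly disconnected induced subgraphs $G_i$, but both algorithms only use the inclusive expansion ratio and a max-flow computation on $N(G_i,\alpha)$ and $N(G_i,\alpha^*,\epsilon)$, neither of which requires connectivity, so no modification is needed. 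This is the only mild subtlety; otherwise the proof is a direct accounting argument.
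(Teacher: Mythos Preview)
Your proposal is correct and follows essentially the same approach as the paper: iterate Algorithms A and B, bound each round by $O(|V||E|^2\log w(V))=O(n^5\log(nU))$, observe that at most $n$ rounds occur since each removes a nonempty $B_i\cup C_i$, and multiply. Your explicit remark that Algorithms A and B do not require connectivity of the subgraphs $G_i$ is a small extra care not spelled out in the paper, but otherwise the arguments coincide.
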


\begin{proof} 
To compute the bottleneck decomposition, Algorithm A and B are run repeatedly. 
In each round we obtain the maximal bottleneck and its neighborhood, then delete them and go to the next round.
So the time complexity of each round is $O(|V||E|^2log(w(V)))$. 
At the end of each round at least one vertex is removed. 
Thus the bottleneck decomposition contains at most $O(|V|)$ loops, which means the total time complexity is $O(|V|^2|E|^2log(w(V)))$. 
Since $|E|$ is at most $O(|V|^2)$ and the weight of each vertex is bounded by $U$, the time complexity can be written as $O(|V|^6log(|V|U))=O(n^6log(nU))$, if $|V|=n$.
$\square$
\end{proof}

\section{Bottleneck Decomposition, Market Equilibrium and Fair Allocation}\label{sec4}
To derive an allocation efficiently, Wu and Zhang \cite{WZ} modeled the resource exchange system as a pure exchange economy, and obtain the equilibrium allocation by computing a market equilibrium. 
% In this section, we shall introduce the reduction in \cite{WZ} from the bottleneck decomposition to a market equilibrium. 
In this section, we shall present some properties of it, and further prove the allocation of such a market equilibrium not only has the property of proportional response, but also is lex-optimal.

\begin{definition}[\textbf{Market Equilibrium}]
Let $p_i$ be the price of agent $i$'s whole resource, $1\leq i\leq n$. The price vector $p=(p_1,p_2,\cdots,p_n)$, with the allocation $X=(x_{ij})$ is called a \emph{market equilibrium} if for each agent  $i\in V$ the following holds:\\
 \noindent 1. $\sum_{j\in \Gamma(i)}x_{ij}=1$ (market clearance);\\
 \noindent 2. $\sum_{j\in \Gamma(i)}x_{ji}p_j\leq p_i$ (budget constraint);\\
 \noindent 3. $X=(x_{ij})$ maximizes $\sum_{j\in \Gamma(i)}x_{ji}w_j$, s.t. $\sum_{j\in \Gamma(i)}x_{ji}p_j\leq p_i$ and $x_{ij}\geq 0$ for each vertex $i$ (individual optimality).
\end{definition}

\noindent \textbf{Construction of a Market Equilibrium from Bottleneck Decomposition}

Given the bottleneck decomposition $B=\{(B_1,C_1),\cdots,(B_k,C_k))\}$, an allocation can be computed by distinguishing three cases~\cite{WZ}. For convenience, such an allocation mechanism is named as \emph{BD Mechanism} by Cheng \emph{et al}~\cite{CDPY,CDQY}. Fig.~\ref{bd} well illustrates it.

\noindent\textbf{BD Mechanism:}
\begin{itemize}
\item For $\alpha_i<1$ (i.e., $B_i\cap C_i=\emptyset$), consider the bipartite graph $\hat{G}_i=(B_i,C_i; E_i)$ where $E_i=(B_i\times C_i)\cap E$. Construct a network by adding source $s$, sink $t$ and directed edge $(s,u)$ with capacity $w_u$ for any $u\in B_i$, directed edge $(v,t)$ with capacity $\frac{w_v}{\alpha_i}$ for any $v\in C_i$ and directed edge $(u,v)$ with capacity $\infty$ for any $(u,v)\in E_i$. By the  max-flow min-cut theorem, there exists flow $f_{uv}\geq 0$ for $u\in B_i$ and $v\in C_i$ such that $\sum_{v\in \Gamma(u)\cap C_i}f_{uv}=w_u$ and $\sum_{u\in \Gamma(v)\cap B_i}f_{uv}=\frac{w_v}{\alpha_i}$. Let the allocation be $x_{uv}=\frac{f_{uv}}{w_u}$ and $x_{vu}=\frac{\alpha_if_{uv}}{w_v}$ which means that $\sum_{v\in \Gamma(u)\cap C_i}x_{uv}=1$ and $\sum_{u\in \Gamma(v)\cap B_i}x_{vu}=\sum_{u\in \Gamma(v)\cap B_i}\frac{\alpha_i\cdot f_{vu}}{w_v}=1$.
\item For $\alpha_k=1$ (i.e., $B_k=C_k=V_k$), construct a bipartite graph $\hat{G}=(B_k,B'_k;E'_k)$ such that $B'_k$ is a copy of $B_k$, there is an edge $(u,v')\in E'_k$ if and only if  $(u,v)\in E[B_k]$.
    Construct a network by the above method and
    by Hall's theorem, for any edge $(u,v')\in E'_k$, there exists flow $f_{uv'}$ such that $\sum_{v'\in \Gamma(u)\cap B'_k}f_{uv'}=w_u$. Let the allocation be $x_{uv}=\frac{f_{uv'}}{w_u}$.
\item For any other edge, $(u,v)\not\in B_i\times C_i$, $i=1,2,\cdots,k$, define $x_{uv}=0$.
\end{itemize}

\begin{figure}[ht]
\centering
\includegraphics[height=4.5cm]{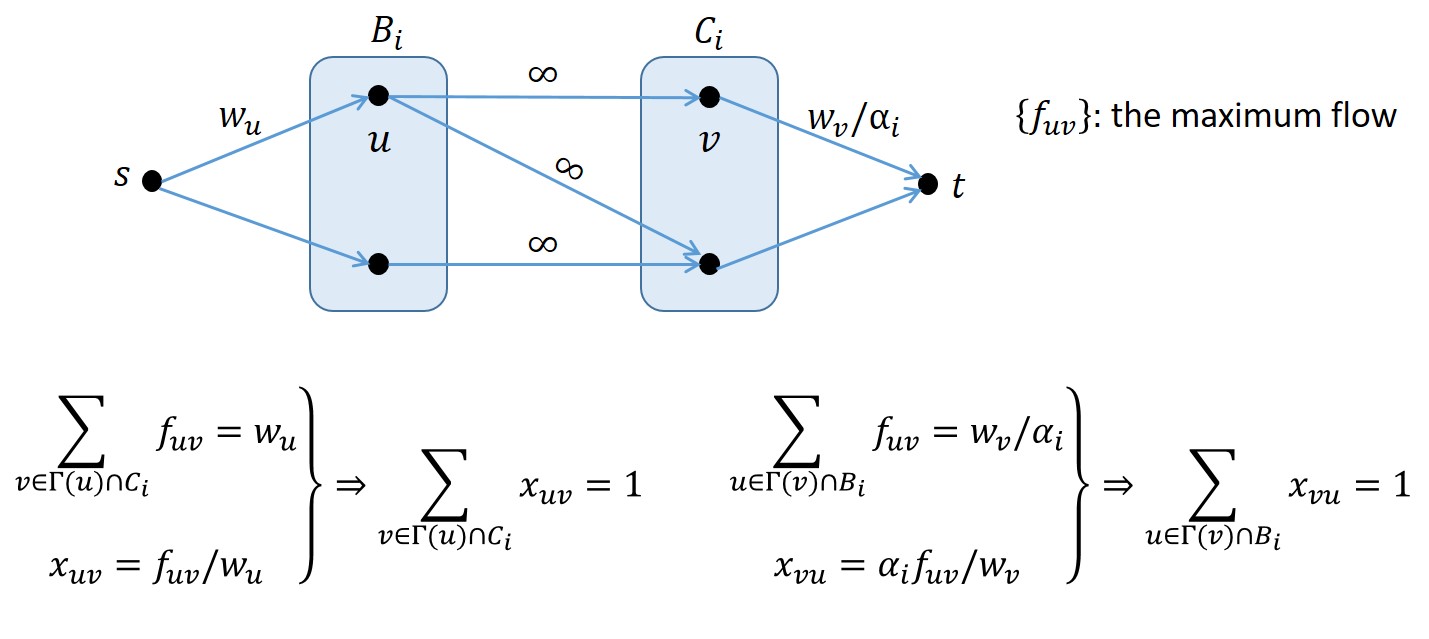}
\caption{\small The illustration of BD Mechanism.}\label{bd}
\end{figure}

Wu and Zhang stated Proposition~\ref{prop4}, saying that once the prices of resource are set properly, such a price vector and the allocation from BD Mechanism make up a market equilibrium, for which the proof is omitted in \cite{WZ}. 
%To make readers understand clearly, we also propose the detailed proof in full paper.
To make readers understand clearly, we also propose the detailed proof in the Appendix.
\begin{proposition}[\cite{WZ}]\label{prop4}
  Given $B=\{(B_1,C_1),\cdots,(B_k,C_k))\}$. If the price to each vertex is set as: for $u\in B_i$, let $p_u=\alpha_iw_u$; and for $u\in C_i$, let $p_u=w_u$, then $(p,X)$ is a market equilibrium, where $X$ is the allocation from BD Mechanism. Furthermore, each agent $u$'s utility is $U_u=w_u\cdot \alpha_i$, if $u\in B_i$; otherwise $U_u=w_u/ \alpha_i$.
\end{proposition}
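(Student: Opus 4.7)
The plan is to verify the three equilibrium conditions---market clearance, budget constraint, and individual optimality---for the BD-Mechanism allocation $X$ under the stated prices $p$, computing each agent's utility along the way.

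Market clearance is immediate from the BD construction: for $u \in B_i$ with $\alpha_i<1$ the bipartite max-flow delivers $\sum_{v \in C_i \cap \Gamma(u)} f_{uv} = w_u$, hence $\sum_v x_{uv} = 1$; the cases $u \in C_i$ (with $\alpha_i<1$) and $u \in B_k = C_k$ are symmetric (the last relying on Hall's theorem as stated in the mechanism). For the budget constraint and the utility formula, I would simply substitute the flow into the definitions. Taking $u \in B_i$ with $\alpha_i<1$, only $v \in C_i \cap \Gamma(u)$ contribute, with $x_{vu} = \alpha_i f_{uv}/w_v$ and $p_v = w_v$, so
\[ U_u = \sum_v x_{vu} w_v = \alpha_i \sum_v f_{uv} = \alpha_i w_u, \qquad \sum_v x_{vu} p_v = \alpha_i w_u = p_u. \]
A symmetric calculation gives $U_u = w_u/\alpha_i$ and a tight budget for $u \in C_i$, while the last layer yields $U_u = w_u = p_u$. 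This proves the utility claim and that every budget constraint holds with equality.

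The technical heart is individual optimality. With our prices the ``bang per buck'' $w_v/p_v$ of a neighbor $v$ equals $1/\alpha_\ell$ if $v \in B_\ell$ and $1$ if $v \in C_\ell$; since $\alpha_1<\alpha_2<\cdots<\alpha_k\le 1$, a higher ratio corresponds to a smaller index. Hence optimality of $X$ as a solution of agent $u$'s linear program reduces to showing that $X$ spends all of $u$'s budget on the neighbors of maximum bang per buck. For this I would establish the structural lemma: (i) if $u \in B_i$ with $\alpha_i<1$, then $\Gamma(u)$ contains no vertex of any $B_\ell$; and (ii) if $u \in C_i$, then $\Gamma(u)$ contains no vertex of $B_\ell$ for $\ell<i$. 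Both clauses follow from a single argument: were $u$ adjacent to $v\in B_\ell$ with $\ell\le i$ (resp.\ $\ell<i$), then $u,v\in V_\ell$ and the edge would lie in $G_\ell$, forcing $u\in \Gamma_{G_\ell}(B_\ell)\cap V_\ell = C_\ell$ and contradicting the layer at which $u$ is removed; the case $\ell>i$ for $u\in B_i$ is ruled out symmetrically on $v$. Combined with Proposition~\ref{prop2} (independence of $B_i$ when $\alpha_i<1$), these imply the maximum bang per buck among $\Gamma(u)$ is $1$ when $u\in B_i$ and $1/\alpha_i$ (attained on $B_i$-neighbors) when $u\in C_i$; in each case the BD allocation concentrates $u$'s spending exactly on those optimal neighbors, so $X$ is optimal.

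I expect the main obstacle to be the structural lemma---one must rule out every cross-layer edge that could expose $u$ to a strictly higher bang-per-buck neighbor, which is precisely what could falsify individual optimality. Once (i)--(ii) are in place, the remaining verification is direct arithmetic from the BD flows and the tight budget already established in Step 2.
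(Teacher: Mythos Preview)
Your argument is correct and follows the same overall strategy as the paper: verify market clearance, the budget constraint, and individual optimality directly from the BD flows and the stated prices, computing the utilities $U_u=\alpha_i w_u$ (resp.\ $w_u/\alpha_i$) along the way. Steps 1 and 2 of your plan match the paper's proof almost line for line.

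Where the two diverge is in the treatment of individual optimality. The paper's proof is very brief here: it restricts the budget inequality to $\Gamma(u)\cap C_i$ (for $u\in B_i$) or $\Gamma(v)\cap B_i$ (for $v\in C_i$) and reads off $U_u\le p_u$ and $\alpha_i U_v\le p_v$, without justifying why neighbors outside these sets cannot offer a strictly better bang per buck. Your approach makes this explicit via the structural lemma (i)--(ii), which rules out exactly those cross-layer edges that could expose an agent to a higher-ratio neighbor; combined with the monotonicity $\alpha_1<\cdots<\alpha_k\le 1$ this pins down the maximum $w_v/p_v$ in $\Gamma(u)$ and shows the BD allocation spends the whole budget there. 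In effect you supply the combinatorial lemma that the paper's argument tacitly relies on, so your proof is the more rigorous of the two while remaining on the same route.
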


Motivated by P2P systems, such as BitTorrent, the concept of proportional response for the consideration of fairness among all participating agents is put forward to encourage the agents to join in the P2P system.
\begin{definition}[\textbf{Proportional Response}]
For each agent $i$, the allocation $(x_{uv}: v\in \Gamma(u))$ of his resource $w_u$ is proportional to what he receives from his neighbors $(w_v\cdot x_{vu}: v\in \Gamma(u))$, i.e.,
$x_{uv}=\frac{x_{vu}w_v}{\sum_{k\in \Gamma(u)}x_{ku}w_k}=\frac{x_{vu}w_v}{U_u}.$
\end{definition}

\begin{proposition}\label{prop7}
  The allocation $X$ from BD Mechanism satisfies the property of proportional response.
\end{proposition}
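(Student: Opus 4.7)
The plan is to verify the identity $x_{uv}=x_{vu}w_v/U_u$ directly from the construction of the BD Mechanism, splitting into cases according to where the edge $(u,v)$ sits inside the bottleneck decomposition. A preliminary observation is that by Proposition~\ref{prop4} we have $U_u=\alpha_i w_u>0$ whenever $u\in B_i$ and $U_u=w_u/\alpha_i>0$ whenever $u\in C_i$, so the right-hand side is always well defined.

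The case analysis has four branches. In the first branch ($u\in B_i$, $v\in C_i$, $\alpha_i<1$), substituting $x_{uv}=f_{uv}/w_u$, $x_{vu}=\alpha_i f_{uv}/w_v$, and $U_u=\alpha_i w_u$ reduces the identity to a direct cancellation of $\alpha_i$, $w_v$, and $f_{uv}$. The second branch ($v\in B_i$, $u\in C_i$, $\alpha_i<1$) is the same calculation with the roles of $u$ and $v$ swapped, now using $U_u=w_u/\alpha_i$. In the third branch ($u,v\in B_k=C_k$ with $\alpha_k=1$), we have $U_u=w_u$, $x_{uv}=f_{uv'}/w_u$, and $x_{vu}=f_{vu'}/w_v$, so the identity collapses to the symmetry $f_{uv'}=f_{vu'}$. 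Every remaining edge falls in the fourth branch, where the BD Mechanism sets $x_{uv}=x_{vu}=0$ and the identity is trivial; this covers, for instance, edges inside some $C_i$ and edges connecting $C_i$ with a later $V_{i+1}$ (edges inside $B_i$ cannot arise when $\alpha_i<1$ because Proposition~\ref{prop2}(2) forces $B_i$ to be independent).

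The main obstacle is the third branch, because Hall's theorem only guarantees the existence of some feasible flow on the bipartite network $\hat{G}=(B_k,B'_k;E'_k)$, not necessarily one that is symmetric across the copy. The fix is to observe that symmetry can always be imposed. Given any feasible flow $f$, define $g_{uv'}:=f_{vu'}$; this is meaningful because $(u,v')\in E'_k$ iff $(v,u')\in E'_k$. The source capacity $w_u$ of $(s,u)$ and the sink capacity $w_u$ of $(u',t)$ are equal, so the two saturation conditions $\sum_{v'}f_{uv'}=w_u$ and $\sum_u f_{uv'}=w_v$ transfer under this swap to give $\sum_{v'}g_{uv'}=w_u$ and $\sum_u g_{uv'}=w_v$; hence $g$ is also feasible. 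Therefore $\tfrac12(f+g)$ is a symmetric feasible flow, and we may assume without loss of generality that the BD Mechanism picks this choice, after which the algebra of the first branch closes the third branch and completes the proof.
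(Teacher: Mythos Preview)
Your argument follows the same direct-verification strategy as the paper's---substitute the BD Mechanism's flow formulas and the utilities from Proposition~\ref{prop4} into the proportional-response identity and cancel---and for the branch $u\in B_i$, $v\in C_i$ with $\alpha_i<1$ your computation matches the paper's line for line. The paper's own proof in fact stops there: it checks only that single branch (and its mirror image), saying nothing about the case $\alpha_k=1$ with $B_k=C_k$ or about the trivial edges with $x_{uv}=x_{vu}=0$.

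Your treatment of the third branch is therefore strictly more careful than the paper's, and the issue you flag is real. For $u,v\in B_k=C_k$ the identity $x_{uv}=x_{vu}w_v/U_u$ reduces (using $U_u=w_u$) to $f_{uv'}=f_{vu'}$, and an arbitrary feasible flow on the doubled bipartite graph need not satisfy this---a directed-cycle flow on a triangle with unit weights already gives a counterexample. Your symmetrization fix, replacing $f$ by $\tfrac12(f+g)$ with $g_{uv'}:=f_{vu'}$, is correct: the symmetry of the edge set $E'_k$ together with the equality of source capacity $w_u$ on $(s,u)$ and sink capacity $w_u$ on $(u',t)$ makes $g$ feasible whenever $f$ is. Since the BD Mechanism as described leaves the choice of flow unspecified, reading ``the allocation $X$ from BD Mechanism'' as one where the last-block flow is taken to be symmetric is the natural interpretation, and indeed the only one under which the proposition holds as written. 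So your proposal is correct, uses the same core idea as the paper, and patches a gap the paper's short proof leaves open.
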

\begin{proof}
  For the allocation from BD Mechanism, if $u\in B_i$ and $v\in C_i$, then $ U_u=w_u\cdot \alpha_i$, $U_v=w_v/\alpha_i$ and $f_{uv}=x_{uv}\cdot w_u= \frac{x_{vu}\cdot w_v}{\alpha_i}.$ So $ x_{uv}=\frac{f_{uv}}{w_u}=\frac{x_{vu}w_v}{\alpha_iw_u}=\frac{x_{vu}w_v}{U_u}$ and $
    x_{vu}=\frac{\alpha_if_{uv}}{w_v}=\frac{x_{uv}w_u}{w_v/\alpha_i}=\frac{x_{uv}w_u}{U_v}$.
$\square$
\end{proof}

Clearly, the allocation $X$ from BD Mechanism can be computed from the maximum flow in each bottleneck pair $(B_i,C_i)$, by Edmonds-Karp Algorithm. So the total time complexity of BD Mechanism is $O(n^5)$. Combining Theorem \ref{theo2} and Proposition \ref{prop7}, we have
\begin{theorem}\label{theo3}
  In the resource sharing system, an allocation with the property of proportional response can be computed in $O(n^6\log(nU))$.
\end{theorem}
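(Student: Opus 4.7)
The plan is to obtain Theorem \ref{theo3} by composing the two main ingredients that have already been established: the polynomial algorithm for bottleneck decomposition (Theorem \ref{theo2}) and the fact that the BD Mechanism produces a proportional-response allocation (Proposition \ref{prop7}). Concretely, I would compute an allocation $X$ in two stages, bound the cost of each stage separately, and then show that the dominant term is $O(n^6\log(nU))$.

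First, I would run the bottleneck decomposition procedure of Section 3 on the input graph $G=(V,E;w)$, alternating Algorithm A and Algorithm B on the shrinking subgraphs $G_1, G_2, \dots, G_k$ until the vertex set is exhausted. By Theorem \ref{theo2} this stage costs $O(n^6\log(nU))$ time and returns $\mathcal{B}=\{(B_1,C_1),\dots,(B_k,C_k)\}$ together with the associated $\alpha$-ratios. Next, I would feed $\mathcal{B}$ to the BD Mechanism described at the beginning of Section 4. This requires, for each $i$, one maximum flow computation: either on the bipartite network built from $(B_i,C_i;E_i)$ with capacities $w_u$ and $w_v/\alpha_i$ (when $\alpha_i<1$), or on the bipartite double cover of $G[B_k]$ (when $\alpha_k=1$). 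Each of these max-flows can be solved by Edmonds–Karp in $O(|V||E|^2)$ time, and summing over all $k\le n$ components yields a total cost of $O(n^5)$ for the allocation-construction stage. The resulting fractional allocation $X=(x_{uv})$ is well defined because the capacities satisfy the max-flow min-cut balance condition built into the construction, and by Proposition \ref{prop7} it satisfies the proportional response identity $x_{uv}=x_{vu}w_v/U_u$ on every edge.

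Finally, I would combine the two bounds. The overall running time is $O(n^6\log(nU))+O(n^5)=O(n^6\log(nU))$, which establishes the claimed complexity, and the correctness (proportional response) follows directly from Proposition \ref{prop7}.

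There is really no hard step left to overcome here; everything nontrivial has been absorbed into Theorem \ref{theo2} and Proposition \ref{prop7}. The only small thing one has to be slightly careful about is the $\alpha_k=1$ case, where one must invoke Hall's theorem (as in the BD Mechanism description) to guarantee a feasible flow saturating every $w_u$ in the bipartite double cover; once that is noted, the accounting and the proportional-response identity extend verbatim. Thus the main expositional burden is simply to cite Theorem \ref{theo2} and Proposition \ref{prop7} and to carry out the time-complexity arithmetic.
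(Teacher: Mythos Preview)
Your proposal is correct and follows essentially the same approach as the paper: the paper's argument is simply the sentence ``Combining Theorem~\ref{theo2} and Proposition~\ref{prop7}'' together with the preceding remark that the BD Mechanism itself runs in $O(n^5)$ via Edmonds--Karp on each pair $(B_i,C_i)$. Your write-up is in fact slightly more detailed than the paper's (e.g., the remark on Hall's theorem for the $\alpha_k=1$ case), but the structure and the arithmetic $O(n^6\log(nU))+O(n^5)=O(n^6\log(nU))$ are identical.
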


Recently, Georgiadis \emph{et al.}~\cite{GIT} discuss the fairness from the compensatory point of view. They characterized the exchange performance of an allocation by the concept of \emph{exchange ratio} vector, in which the coordinate is the exchange ratio $\beta_u(X)=U_u(X)/w_u$ of each agent. In~\cite{GIT}, an allocation is said to be fair, if its exchange ratio vector is lex-optimal, and its properties are introduced in the following. Here some notations shall be introduced in advance. Given an allocation $X$, for a set $S\subseteq V$, $N(S)=\{v\in V:x_{uv}>0, \exists u\in S\}$ denotes the set of agents who receive resource from agents in $S$. For the exchange ratio vector $\beta(X)$, the different values (level) of coordinates are denoted by $l_1<l_2<\cdots<l_M$, $M\leq n$. Let $L_i(X)=\{v\in V: \beta_v(X)=l_i\}$ be the set in which each agent's exchange ratio is equal to $l_i$. Georgiadis et al.~\cite{GIT} proposed the following characterization of a lex-optimal allocation.

\begin{proposition}[\cite{GIT}]\label{prop5}
  1. An allocation $X$ with $M\geq 2$ is lex-optimal, if and only if\\
  \noindent (1) $L_i$ is an independent set in $G$, $i=1,2,\cdots,\lfloor\frac M2\rfloor$;\\
  \noindent (2) $L_{M-i+1}=N(L_i)$, $i=1,2,\cdots,\lfloor\frac M2\rfloor$;\\
  \noindent (3) $l_i\cdot l_{M-i+1}=1$, $i=1,2,\cdots,\lfloor\frac M2\rfloor$;\\
  \noindent (4) $\sum_{u\in L_i}U_u=\sum_{u\in L_{M-i+1}}w_u$, $i=1,2,\cdots,\lfloor\frac M2\rfloor$.\\
  \noindent 2. An allocation $X$ with $M=1$ is lex-optimal if and only if $l_1=1$. %Also if an allocation $X$ has $M=1$ and $l_1=1$, then $X$ is lex-optimal.
\end{proposition}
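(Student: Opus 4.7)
My plan is to prove Proposition~\ref{prop5} by tightly coupling the level sets $L_i$ of a lex-optimal allocation with the bottleneck decomposition of $G$ from Section~\ref{sec2}. The central tool is the inequality that for any feasible allocation $X$ with $\sum_{v}x_{uv}=1$ and any $S\subseteq V$, one has $\sum_{u\in S}U_u\leq w(\Gamma(S))$, since each neighbor of $S$ contributes at most its own weight; this rewrites as $\min_{u\in S}\beta_u(X)\leq\alpha(S)$, so $l_1\leq\min_S\alpha(S)=\alpha_1$ holds for every feasible $X$. The case $M=1$ is immediate: a common ratio $\beta$ for all agents gives $\beta\cdot w(V)\leq w(V)$, so $\beta\leq 1$, and lex-optimality forces $\beta=1$, which is realized precisely when $V$ is itself the maximal bottleneck ($\alpha_k=1$).

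For $M\geq 2$ I would prove necessity first. Because the BD mechanism from Section~\ref{sec4} attains $l_1=\alpha_1$ with exactly $|B_1|$ agents at that level, lex-optimality of any $X$ forces $l_1=\alpha_1$. The bound $\sum_{u\in B_1}\beta_u w_u\leq \alpha_1 w(B_1)$ combined with $\beta_u\geq\alpha_1$ everywhere becomes an equality, giving $\beta_u=\alpha_1$ for every $u\in B_1$, so $B_1\subseteq L_1$. The reverse inclusion $L_1\subseteq B_1$ follows by lex-comparison with the BD allocation: if $|L_1|>|B_1|$ then at sorted-position $|B_1|+1$ the vector of $X$ still reads $\alpha_1$ while BD reads $\alpha_2>\alpha_1$, contradicting the lex-optimality of $X$. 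The same saturation also forces every $v\in\Gamma(B_1)$ to ship its whole weight into $B_1$, so $\Gamma(B_1)\subseteq L_M$ with $\beta_v=1/\alpha_1$ there, and a symmetric cardinality argument yields $L_M=C_1$ and $l_M=1/\alpha_1$. These identifications give condition~(1) (via Proposition~\ref{prop2}(2)), (2), and (3) for $i=1$, and the flow identity $l_1 w(B_1)=w(C_1)$ is~(4). Since $L_1\cup L_M$ is closed under trade, $X$ restricted to the induced subgraph on $V\setminus(L_1\cup L_M)$ is itself lex-optimal there, and its maximal bottleneck is $B_2$ by Definition~\ref{BD}; induction on $|V|$ delivers (1)--(4) for all $i\leq\lfloor M/2\rfloor$.

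For sufficiency, I would run the recursion in reverse. Given $X$ satisfying (1)--(4), condition~(4) combined with $\sum_{u\in L_1}U_u\leq w(N(L_1))$ forces every $v\in L_M=N(L_1)$ to send its full weight into $L_1$, so $L_1\cup L_M$ is closed under trade and $X$ restricts to a feasible allocation on $V\setminus(L_1\cup L_M)$. Induction shows that the $\beta$-vector of $X$ coincides vertex-by-vertex with that of the BD allocation on $G$, and since the BD allocation is lex-optimal (by the necessity direction applied to any lex-optimal reference allocation), so is $X$. The main obstacle is the refined identification $L_1=B_1$ rather than merely $B_1\subseteq L_1$: lex-optimality is sharp enough here only because the BD mechanism already attains $|L_1|=|B_1|$, so Wu and Zhang's uniqueness of the maximal bottleneck (implicit in Proposition~\ref{prop2}) plays a central role. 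A secondary subtlety is verifying that the restriction to $V\setminus(L_1\cup L_M)$ has no flow leaking across the cut, but this is exactly the saturation equality in~(4).
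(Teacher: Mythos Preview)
The paper does not prove Proposition~\ref{prop5}; it is quoted as a known characterization from Georgiadis \emph{et al.}~\cite{GIT} and then \emph{applied} (together with Propositions~\ref{prop2} and~\ref{prop4}) to establish Theorem~\ref{prop6}. There is therefore no proof in the paper to compare your proposal against.

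Your route is nonetheless interesting because it inverts the paper's logic: instead of using the cited characterization to certify the BD allocation as lex-optimal, you use the BD allocation as a witness to derive the characterization. This cannot be the argument in~\cite{GIT} itself, since the bottleneck decomposition is Wu--Zhang's later construction, so methodologically your approach is genuinely different from the original source. Your necessity direction is essentially sound: the general bound $l_1\leq\alpha_1$ together with the BD allocation attaining minimum ratio $\alpha_1$ on exactly $|B_1|$ agents forces $L_1=B_1$ for any lex-optimal $X$, and the saturation equality then propagates. Your sufficiency direction, however, has a gap. You assert that conditions (1)--(4) alone make the $\beta$-vector of $X$ coincide vertex-by-vertex with that of the BD allocation, but (1)--(4) only yield $N(L_1)=L_M\subseteq\Gamma(L_1)$, not $\Gamma(L_1)=L_M$, and hence only $l_1\leq\alpha(L_1)$ rather than $l_1=\alpha_1$; nothing yet pins $L_1$ to the maximal bottleneck $B_1$. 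Without an additional step --- for instance, showing directly that any two allocations satisfying (1)--(4) must have identical sorted $\beta$-vectors, or comparing $X$ against a lex-optimal reference allocation whose level sets you have already identified with the $B_i,C_i$ --- the induction on $V\setminus(L_1\cup L_M)$ does not get started.
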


Based on above proposition, we can continue to conclude that the allocation $X$ from BD Mechanism is also lex-optimal.

% The second claim in Proposition \ref{prop5} shows the sufficient and necessary condition of the most ideal state that each agent's exchange ratio is equal to 1.

\begin{theorem}\label{prop6}
  The allocation $X$ from BD Mechanism is lex-optimal.
\end{theorem}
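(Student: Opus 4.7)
The plan is to apply the characterization of lex-optimal allocations given by Proposition~\ref{prop5}: we just need to identify the level sets of the exchange ratio vector $\beta(X)$ under the BD Mechanism and check the four stated conditions. From Proposition~\ref{prop4}, we already know $\beta_u(X)=\alpha_i$ for $u\in B_i$ and $\beta_u(X)=1/\alpha_i$ for $u\in C_i$ (noting that when $\alpha_i=1$ these coincide and $B_i=C_i$). By Proposition~\ref{prop2}-(1), the $\alpha_i$'s are strictly increasing and all in $(0,1]$, so the distinct levels $l_1<l_2<\cdots<l_M$ are exactly the $\alpha_i$'s together with their reciprocals, with the corresponding level sets being the $B_i$'s and $C_i$'s.

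First I would split on whether $\alpha_k<1$ or $\alpha_k=1$. In the former, $M=2k$, $L_i=B_i$ and $L_{M-i+1}=C_i$ for $i=1,\ldots,k$; in the latter with $k\geq 2$, $M=2k-1$, the same identification holds for $i<k$, and the middle set is $L_k=B_k=C_k$ (only levels up to $\lfloor M/2\rfloor=k-1$ need to be checked). If $k=1$ and $\alpha_1=1$, then $M=1$ and $l_1=1$, which is lex-optimal by Proposition~\ref{prop5}-(2).

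Next I would verify the four conditions for each $i\le\lfloor M/2\rfloor$, using the fact that $\alpha_i<1$ in this range. Condition (1) (independence of $L_i=B_i$) is immediate from Proposition~\ref{prop2}-(2). Condition (3) is the trivial identity $\alpha_i\cdot(1/\alpha_i)=1$. Condition (4) is $\sum_{u\in B_i}U_u=w(B_i)\alpha_i=w(C_i)=\sum_{v\in C_i}w_v$, which follows directly from the definition $\alpha_i=w(C_i)/w(B_i)$. Condition (2), $N(B_i)=C_i$, is where the flow construction in the BD Mechanism matters: every $u\in B_i$ sends positive flow only to vertices in $C_i$ (so $N(B_i)\subseteq C_i$), and conversely the saturation $\sum_{u\in B_i\cap\Gamma(v)}f_{uv}=w_v/\alpha_i>0$ for every $v\in C_i$ forces some incoming allocation $x_{uv}>0$, giving $C_i\subseteq N(B_i)$.

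The routine parts are (1), (3), (4). The main obstacle I anticipate is condition (2): it requires knowing that the max-flow used by the BD Mechanism actually touches every vertex of $C_i$ on the sink side, which in turn hinges on the capacity $w_v/\alpha_i$ at each $v\in C_i$ being saturated by the feasible flow guaranteed by the min-cut calculation in Section~4. Once this is spelled out, the four conditions of Proposition~\ref{prop5} are all met and lex-optimality follows.
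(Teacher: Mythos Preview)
Your proposal is correct and follows essentially the same route as the paper: identify the level sets of $\beta(X)$ with the $B_i$'s and $C_i$'s via Proposition~\ref{prop4} and Proposition~\ref{prop2}, then verify the four conditions of Proposition~\ref{prop5}. The only difference is that you are more explicit about condition~(2), arguing both inclusions $N(B_i)\subseteq C_i$ and $C_i\subseteq N(B_i)$ via the flow saturation $\sum_{u\in B_i\cap\Gamma(v)}f_{uv}=w_v/\alpha_i>0$, whereas the paper simply asserts $L_{M-i+1}=N(L_i)$ on the grounds that ``all resource exchange only happens between $B_i$ and $C_i$''; your version is the cleaner justification.
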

\begin{proof}
  Given a bottleneck decomposition $\mathcal{B}=\{(B_1,C_1),\cdots,(B_k,C_k)\}$.  
  By Proposition \ref{prop4}, $U_u=w_u\cdot\alpha_i$ if $u\in B_i$ and $U_u=w_u/\alpha_i$ if $u\in C_i$. 
  Thus each agent's exchange ratio can be written as: $\beta_u=\alpha_i$ if $u\in B_i$ and $U_u=1/\alpha_i$ if $u\in C_i$.
  If $k=1$ and $B_1=C_1=V$, then all agents have the same exchange ratio $\beta_u(X)=1=l_1$ with $M=1$ and the second claim in Proposition \ref{prop5} is satisfied for this case. 
  If $k=1$ and $\alpha_k<1$ or $k>1$, then we know $\alpha_1<\alpha_2<\cdots<\alpha_k\leq 1$ by Proposition \ref{prop2}. 
  The relationship between $\alpha$-ratio and exchange ratio makes the different values of $\beta_u$ be ordered as: $\alpha_1<\cdots<\alpha_k\leq 1/\alpha_k<\cdots<1/\alpha_1$, where $\alpha_k=1/\alpha_k$ if and only if $\alpha_k=1$ and $B_k=C_k$. 
  So the number of different values $M=2k$ if $\alpha_k<1$ and $M=2k-1$ if $\alpha_k=1$. 
  By the definitions of $l_i$ and $L_i$, we have $L_i=B_i$ with $l_i=\alpha_i<1$ and $L_{M-i+1}=C_i$ with $l_{M-i+1}=1/\alpha_i>1$ and $L_i=B_i$ is independent by Proposition \ref{prop2}, $i=1,\cdots,\lfloor M/2\rfloor$. 
  In addition, since all resource exchange only happens between $B_i$ and $C_i$ by BD Mechanism, $L_{M-i+1}=N(L_i)$ and $\sum_{u\in L_i}U_u=\sum_{v\in L_{M-i+1}}w_{v}$, $i=1,\cdots, \lfloor M/2\rfloor$. Until now all statements of the first claim are satisfied for this case. 
  It means the allocation $X$ from BD Mechanism is lex-optimal.
  $\square$
\end{proof}

\section{Conclusion}\label{sec5}
This paper discusses the issue of the computation of a fair allocation in the resource sharing system through a combinatorial bottleneck decomposition. 
We design an algorithm to solve the bottleneck decomposition for any graph $G(V,E;w_v)$ in $O(n^6\log(nU))$ time, where $n=|V|$ and $U=\max_{v\in V} w_v$. 
Our work also completes the computation of a market equilibrium in the resource exchange system for the consideration of economic efficiency in~\cite{WZ}. 
% Another one is to make the analysis of agents' incentives be complete from the perspective of mechanism design \cite{CDPY,CDQY}. 
Furthermore, we show the equilibrium allocation from the bottleneck decomposition not only is proportional response, but also is lex-optimal, which establishes a connection between two concepts of fairness in~\cite{GIT} and~\cite{WZ}. 
% Compared with the algorithm for a lex-optimal allocation in~\cite{GIT} by transforming to a linear programming, our algorithm is totally different based on a combinatorial structure: bottleneck decomposition.
Involving two different definitions of fairness for resource allocation, we hope to explore other proper concepts of fairness and to design efficient algorithms to find such fair allocations in the future.

%
% the environments 'definition', 'lemma', 'proposition', 'corollary',
% 'remark', and 'example' are defined in the LLNCS documentclass as well.
%
%
% ---- Bibliography ----
%
% BibTeX users should specify bibliography style 'splncs04'.
% References will then be sorted and formatted in the correct style.
%
\bibliographystyle{splncs04}
\bibliography{reference}
%
% \begin{thebibliography}{8}
% \bibitem{ref_article1}
% Author, F.: Article title. Journal \textbf{2}(5), 99--110 (2016)

% \bibitem{ref_lncs1}
% Author, F., Author, S.: Title of a proceedings paper. In: Editor,
% F., Editor, S. (eds.) CONFERENCE 2016, LNCS, vol. 9999, pp. 1--13.
% Springer, Heidelberg (2016). \doi{10.10007/1234567890}

% \bibitem{ref_book1}
% Author, F., Author, S., Author, T.: Book title. 2nd edn. Publisher,
% Location (1999)

% \bibitem{ref_proc1}
% Author, A.-B.: Contribution title. In: 9th International Proceedings
% on Proceedings, pp. 1--2. Publisher, Location (2010)

% \bibitem{ref_url1}
% LNCS Homepage, \url{http://www.springer.com/lncs}. Last accessed 4
% Oct 2017
% \end{thebibliography}

% \newpage

 \section{Appendix}
 \addtocounter{lemma}{-3}
 \begin{lemma}
   Given a graph $G$ and a parameter $\alpha$.
  Let $\alpha^*$ be the minimal $\alpha$-ratio in $G$ and $cap(G,\alpha)$ be the minimum cut capacity of network $N(G,\alpha)$. Then\\
 \noindent (1) $cap(G,\alpha)<\alpha w(V)$, if and only if $\alpha>\alpha^*$;\\
 \noindent (2) $cap(G,\alpha)=\alpha w(V)$, if and only if $\alpha=\alpha^*$;\\
 \noindent (3) $cap(G,\alpha)>\alpha w(V)$, if and only if $\alpha<\alpha^*$.
  \end{lemma}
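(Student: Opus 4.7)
The plan is to lean on Lemma~\ref{lemma1} to reduce the three-case statement to a single algebraic identity. By Lemma~\ref{lemma1}, every finite $s$-$t$ cut in $N(G,\alpha)$ corresponds to some $B\subseteq V$ with capacity $\alpha(w(V)-w(B))+w(\Gamma(B))$. I would rewrite this as
\[
cap(S,T) \;=\; \alpha\, w(V) \;+\; w(B)\bigl(\alpha(B)-\alpha\bigr)
\]
for nonempty $B$ (and as exactly $\alpha w(V)$ for the trivial $B=\emptyset$). Once written in this form, the sign of $\alpha(B)-\alpha$ alone governs whether a cut is smaller than, equal to, or larger than the baseline $\alpha w(V)$, and the minimum over $B$ is controlled by the bottleneck-attaining set.

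With this identity in hand, the $(\Leftarrow)$ direction of each case is a one-line check. For (1), if $\alpha>\alpha^*$, take $B$ to be any bottleneck, so $\alpha(B)=\alpha^*<\alpha$ with $w(B)>0$; the corresponding cut then witnesses $cap(G,\alpha)<\alpha w(V)$. For (2), if $\alpha=\alpha^*$, every nonempty $B$ satisfies $\alpha(B)\geq\alpha^*=\alpha$, so every corresponding cut has capacity at least $\alpha w(V)$, with equality realised by any bottleneck, giving $cap(G,\alpha)=\alpha w(V)$. For (3), if $\alpha<\alpha^*$, every nonempty $B$ satisfies $\alpha(B)>\alpha$, so every nontrivial cut strictly exceeds $\alpha w(V)$.

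Since exactly one of $\alpha>\alpha^*$, $\alpha=\alpha^*$, $\alpha<\alpha^*$ holds by trichotomy, and the three possible conclusions on $cap(G,\alpha)$ are likewise mutually exclusive, the $(\Rightarrow)$ direction of (1), (2), (3) follows automatically from the three $(\Leftarrow)$ directions without further work.

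The subtlety I would flag explicitly at the start of the proof — and what I expect to be the sticking point — is the trivial cut $S=\{s\}$ (corresponding to $B=\emptyset$), whose capacity is always exactly $\alpha w(V)$. Read literally, this forces $cap(G,\alpha)\leq \alpha w(V)$ for every $\alpha$, which conflicts with the strict inequality in (3). The reading that is consistent with the way Algorithm~A actually uses the lemma is to interpret $cap(G,\alpha)$ as the minimum over $s$-$t$ cuts with nonempty corresponding set (equivalently, excluding the singleton source-side $\{s\}$); fixing that convention at the top of the proof, the case analysis above delivers all three equivalences cleanly.
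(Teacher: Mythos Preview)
Your argument is correct and essentially the same as the paper's: both rewrite every finite cut capacity as $\alpha\,w(V)+w(B)\bigl(\alpha(B)-\alpha\bigr)$ and use a bottleneck set as the witness, with your trichotomy shortcut merely streamlining the paper's case-by-case treatment of the forward directions. Your flag on the trivial cut $B=\emptyset$ is well taken and in fact sharper than the paper, whose own proof of~(3) silently divides by $w(B)$ and thus tacitly adopts the very convention you make explicit.
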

 \begin{proof}
   Let $(S,T)$ be the minimum cut of the network $N(G,\alpha)$ and $B'$ be a bottleneck in $G$ satisfying $\frac{w(\Gamma(B'))}{w(B')}=\alpha^*$.
   Therefore cut $(S,T)$ must have a finite capacity and there exists a subset $B$ such that $S=\{s\}\cup B\cup\Gamma(\widetilde{B})$ by Lemma \ref{lemma1}. It is possible that $B\neq B'$. So $\frac{w(\Gamma(B))}{w(B)}\geq \alpha^*$. The minimum capacity of $(S,T)$ is
 $
     cap(G,\alpha)=\alpha(w(V)-w(B))+w(\Gamma(B))=\alpha w(V)+w(B)(\frac{w(\Gamma(B))}{w(B)}-\alpha).
 $

   (1) If $cap(G,\alpha)<\alpha w(V)$, it is easy to deduce that $\alpha>\frac{w(\Gamma(B))}{w(B)}\geq \alpha^*$. Conversely, if $\alpha>\alpha^*$, we can construct another cut $(S',T')$ where $S'=\{s\}\cup B'\cup \Gamma(\widetilde{B}')$ corresponding to the bottleneck $B'$. Then $cap(S',T')=\alpha(w(V)-w(B'))+w(\Gamma(B'))=\alpha w(V) + (\alpha^*-\alpha) w(B')<\alpha w(V).$
 Therefore, $cap(G,\alpha)\leq cap(S',T')<\alpha w(V)$.

   (2) If $cap(G,\alpha)=\alpha w(V)$, then $\alpha=\frac{w(\Gamma(B))}{w(B)}\geq \alpha^*$. It is not hard to see if $\alpha>\alpha^*$, then $cap(G,\alpha)<\alpha w(V)$ by Claim (1), contradicting to the condition of $cap(G,\alpha)=\alpha w(V)$. So $\alpha=\alpha^*$.  On the other hand,
   if $\alpha=\alpha^*$, then the fact $\frac{w(\Gamma(B))}{w(B)}\geq \alpha^*$ makes
     $cap(G,\alpha)=\alpha w(V)+w(B)(\frac{w(\Gamma(B))}{w(B)}-\alpha^*)\geq\alpha w(V)$. Now let us construct cut $(S',T')$ where $S'=\{s\}\cup B'\cup \Gamma(\widetilde{B}')$. Obviously $cap(G,\alpha)\leq cap(S',T')=\alpha w(V)$. Combining above two aspects, we have $cap(G,\alpha)=\alpha w(V)$

 (3) For the case $cap(G,\alpha)>\alpha w(V)$, to prove $\alpha<\alpha^*$, we suppose to the contrary that $\alpha\geq \alpha^*$. Then Claim (1) and (2) promise that $cap(G,\alpha)\leq \alpha w(V)$. It's a contradiction. Thus $\alpha<\alpha^*$. On the other hand, if $\alpha<\alpha^*$, we can easily infer that
 $cap(G,\alpha)=\alpha w(V)+w(B)(\frac{w(\Gamma(B))}{w(B)}-\alpha)> \alpha w(V),$
 where the inequality is right since $\frac{w(\Gamma(B))}{w(B)}\geq \alpha^*>\alpha$.
 $\square$
 \end{proof}

 \addtocounter{lemma}{+1}
 \begin{lemma}
 Given a graph $G$. If $\epsilon\leq \frac1{w^3(V)}$, then the corresponding set $\hat{B}$ of the minimum cut in $N(G,\alpha^*,\epsilon)$ is the maximal bottleneck of $G$.
 \end{lemma}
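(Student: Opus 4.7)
The plan is to compare $cap(B,\epsilon)$ for an arbitrary candidate subset $B \subseteq V$ against $cap(B^*,\epsilon)$ for the maximal bottleneck $B^*$, and to show that the choice $\epsilon \leq 1/w^3(V)$ makes $cap(B,\epsilon) > cap(B^*,\epsilon)$ whenever $B \neq B^*$. Starting from (\ref{eqn4}), I would rewrite
\[
cap(B,\epsilon) \;=\; \alpha^* w(V) + w(B)\bigl(\alpha(B) - \alpha^*\bigr) + (n - |B|)\,\epsilon,
\]
which specializes to $cap(B^*,\epsilon) = \alpha^* w(V) + (n - |B^*|)\,\epsilon$ since $\alpha(B^*) = \alpha^*$. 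The target inequality then reduces to
\[
w(B)\bigl(\alpha(B) - \alpha^*\bigr) \;>\; \bigl(|B| - |B^*|\bigr)\,\epsilon.
\]

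I would split on whether $B$ is itself a bottleneck. When $\alpha(B) = \alpha^*$, the left-hand side vanishes, and Lemma \ref{lemma3} together with the uniqueness of the maximal bottleneck from \cite{WZ} (which forces $|B^*| > |B|$ for every other bottleneck) makes the right-hand side strictly negative, giving the strict inequality for free. The substantive case is $\alpha(B) > \alpha^*$, where both sides may be positive and a quantitative estimate is needed.

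The main obstacle is a sharp lower bound on $\alpha(B) - \alpha^*$. The key observation is that both $\alpha(B) = w(\Gamma(B))/w(B)$ and $\alpha^* = w(\Gamma(B^*))/w(B^*)$ are rationals sharing the common denominator $w(B)\cdot w(B^*)$ with integer numerators, so since they differ, the numerators differ by at least one integer unit, yielding
\[
\alpha(B) - \alpha^* \;\geq\; \frac{1}{w(B)\,w(B^*)},\qquad \text{hence} \qquad w(B)\bigl(\alpha(B)-\alpha^*\bigr) \;\geq\; \frac{1}{w(B^*)} \;\geq\; \frac{1}{w(V)}.
\]
On the other side, $(|B| - |B^*|)\,\epsilon \leq n\epsilon \leq w(V)\epsilon \leq 1/w^2(V)$ under the hypothesis $\epsilon \leq 1/w^3(V)$. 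Since $1/w(V) > 1/w^2(V)$ whenever $w(V) \geq 2$ (and $w(V) = 1$ forces the trivial singleton graph, for which the claim is immediate), the left-hand side strictly dominates the right-hand side.

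Putting the two cases together, every $B \neq B^*$ satisfies $cap(B,\epsilon) > cap(B^*,\epsilon)$, so the minimum cut in $N(G,\alpha^*,\epsilon)$ has corresponding set exactly $B^* = \hat{B}$. The delicate ingredient is really the rational-arithmetic lower bound on $\alpha(B) - \alpha^*$ weighed against the linear-in-$\epsilon$ savings term; the remaining steps reduce to algebraic manipulation together with direct appeals to Lemma \ref{lemma1} (for the capacity formula), Lemma \ref{lemma3} (for comparison among bottlenecks), and the uniqueness of the maximal bottleneck.
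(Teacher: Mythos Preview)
Your proposal is correct and follows essentially the same approach as the paper: both arguments hinge on the capacity formula~(\ref{eqn4}) and the rational-arithmetic gap showing that distinct $\alpha$-ratios differ by at least $1/w^2(V)$, which the hypothesis $\epsilon\leq 1/w^3(V)$ is designed to beat. The only cosmetic difference is direction---the paper starts from the minimizer $\hat{B}$, bounds $\alpha(\hat{B})-\alpha^*<1/w^2(V)$ to force $\hat{B}$ to be a bottleneck, and then invokes Corollary~\ref{lemma4}, whereas you start from $B^*$ and show it strictly beats every other candidate (handling the bottleneck case via uniqueness rather than via Corollary~\ref{lemma4}).
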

 \begin{proof}
 Denote $B^*$ to be the maximal bottleneck of $G$. In network $N(G,\alpha^*,\epsilon)$,
 \begin{eqnarray*}
   \alpha^*w(V)+(\alpha(\hat{B})-\alpha^*)w(\hat{B})+(n-|\hat{B}|)\cdot\epsilon&=&cap(\hat{B},\epsilon)\\
   &\leq&cap(B^*,\epsilon)\\
   &=&\alpha^*w(V)+(n-|B^*|)\cdot\epsilon.
 \end{eqnarray*}
 The inequality comes from the condition that $cap(\hat{B},\epsilon)$ is the minimum cut capacity in $N(G,\alpha^*,\epsilon)$. So
 $0\leq (\alpha(\hat{B})-\alpha^*)w(\hat{B})\leq \left(|\hat{B}|-|B^*|\right)\cdot\epsilon$
 and
 $$
   \alpha(\hat{B})-\alpha^*\leq \frac{|\hat{B}|-|B^*|}{w(\hat{B})}\cdot\epsilon
   \leq\frac{|\hat{B}|-|B^*|}{w(\hat{B})}\cdot\frac{1}{w^3(V)}<\frac{1}{w^2(V)}.
 $$
 As assumed in advance that the weights of all vertices are positive integers, each set's $\alpha$-ratio is a rational number and the difference of any two different $\alpha$-ratios should be great than $\frac1{w(V)^2}$.
 Because $\alpha(\hat{B})-\alpha^*<\frac{1}{w^2(V)}$, we can confirm $\alpha(\hat{B})=\alpha^*$, which means $\hat{B}$ is a bottleneck. In addition, Corollary \ref{lemma4} makes it sure that $\hat{B}$ is the maximal bottleneck of $G$.
 $\square$
 \end{proof}

 \addtocounter{proposition}{-4}
 \begin{proposition}
   Given an undirected and connected graph $G=(V,E;w)$, the bottleneck decomposition $\mathcal{B}$ of $G$ satisfies\\
   \noindent (1) $0< \alpha_1<\alpha_2<\cdots<\alpha_k\leq 1$;\\
   \noindent (2) if $\alpha_i=1$, then $i=k$ and $B_k=C_k$; otherwise $B_i$ is independent and $B_i\cap C_i=\emptyset$.
 \end{proposition}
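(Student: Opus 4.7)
The plan is to treat (1) and (2) in order, as the strict monotonicity established in (1) fuels the case analysis in (2).

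For (1), the two extreme bounds come almost for free: $\alpha_1 > 0$ because $G$ is connected with positive weights, so no nonempty $S \subseteq V$ can have $\Gamma(S) = \emptyset$; and $\alpha_k \leq 1$ because $\Gamma_{G_k}(V_k) \subseteq V_k$ makes $V_k$ a candidate of $\alpha$-ratio at most $1$. The substantive part is $\alpha_i < \alpha_{i+1}$, which I would prove by contradiction. Suppose some nonempty $S \subseteq V_{i+1}$ satisfies $\alpha_{G_{i+1}}(S) \leq \alpha_i$. Because $V_{i+1}$ is disjoint from $B_i \cup C_i$, no vertex of $S$ can be adjacent to $B_i$ in $G_i$ (else $S$ would meet $C_i$), so $\Gamma_{G_i}(B_i \cup S) = C_i \cup \Gamma_{G_{i+1}}(S)$ with the two pieces disjoint. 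A routine cross-multiplication then shows that $\alpha_{G_i}(B_i \cup S) \leq \alpha_i$ holds exactly when $\alpha_{G_{i+1}}(S) \leq \alpha_i$: a strict inequality contradicts the minimality of $\alpha_i$ on $G_i$, while equality makes the strict superset $B_i \cup S$ another bottleneck, contradicting maximality of $B_i$. Specialising the same argument to $S = \{v\}$ rules out isolated vertices in every $G_j$, so $\alpha_j > 0$ throughout.

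For part (2), the $\alpha_i = 1$ branch is short: strict monotonicity would push $\alpha_{i+1} > 1$, contradicting $\alpha_{i+1} \leq 1$, so $i = k$; then $V_k$ attains the minimum and is itself a bottleneck, so by uniqueness of the maximal bottleneck $B_k = V_k$, and $w(\Gamma_{G_k}(V_k)) = w(V_k)$ together with $\Gamma_{G_k}(V_k) \subseteq V_k$ and positive weights forces $C_k = V_k = B_k$.

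The main obstacle is the $\alpha_i < 1$ branch, i.e.\ proving $B_i \cap C_i = \emptyset$. I plan a direct contradiction: assume $Q := B_i \cap C_i \neq \emptyset$ and show $\alpha_{G_i}(B_i \setminus Q) < \alpha_i$, violating minimality. First, $B_i \setminus Q$ is nonempty, for otherwise $B_i \subseteq C_i$ gives $w(B_i) \leq w(C_i) = \alpha_i w(B_i)$ and hence $\alpha_i \geq 1$. Next, any $v \in B_i \setminus C_i$ has no $G_i$-neighbour in $B_i$ (else $v$ would lie in $C_i$), while $v \in B_i$ forces $\Gamma_{G_i}(v) \subseteq \Gamma_{G_i}(B_i) = C_i$; combining the two, $\Gamma_{G_i}(v) \subseteq C_i \setminus B_i = C_i \setminus Q$, whence $w(\Gamma_{G_i}(B_i \setminus Q)) \leq w(C_i) - w(Q)$. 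The crucial elementary inequality is then $\frac{w(C_i) - w(Q)}{w(B_i) - w(Q)} < \frac{w(C_i)}{w(B_i)} = \alpha_i$, which holds precisely because $w(C_i) < w(B_i)$ (i.e.\ $\alpha_i < 1$) and $w(Q) > 0$. The real subtlety is spotting that one should \emph{shrink} $B_i$ by deleting $Q$ rather than \emph{enlarge} it by $C_i$; the latter is the natural instinct coming from maximality but leads to messy inequalities, whereas the shrinking move produces the contradiction in one line.
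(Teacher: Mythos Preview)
Your proof is correct and follows essentially the same approach as the paper's: the monotonicity argument via $B_i \cup S$ (the paper takes $S = B_{i+1}$), the no-isolated-vertex argument via $B_i \cup \{v\}$, and the $\alpha_i < 1$ case via shrinking $B_i$ to $B_i \setminus (B_i \cap C_i)$ are all identical in substance. The only cosmetic difference is that you package the isolated-vertex lemma as a specialisation of the monotonicity argument, whereas the paper proves it first as a standalone step; both orderings are fine.
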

 \begin{proof}
 For the first claim, we try to show $0<\alpha_i\leq 1$ for each $i=1,2,\cdots,k$ firstly and then prove the monotonic increasing property of sequence $\{\alpha_i\}$.

 To achieve the result of $0<\alpha_i\leq 1$, we shall prove that
  each subgraph $G_i$ does not contain any isolated vertex. Thus based on the claim of no isolated vertex, any vertex subset's neighborhood in $G_i$ is not empty and $\alpha_i$ is larger than $0$. Meanwhile, we have a special vertex set $V_i$, whose neighborhood in $G_i$ is still $V_i$, because there is no isolated vertex in $G_i$. It implies $V_i$'s $\alpha$-ratio in $G_i$ is 1. Therefore $\alpha_i$ can not larger than 1 and it must be $\alpha_i\leq 1$.

  To show there is no isolated vertices in any $G_i$, $i>1$, we suppose to the contrary that $G_{i}$ is the subgraph with the smallest index in $\mathcal{B}$ containing an isolated vertex $x$. Then $x$ must has neighbors in $B_{i-1}\cup C_{i-1}$. Otherwise, $x$ also is isolated in $G_{i-1}$, which contradicts to our assumption of the smallest index on $G_i$. We claim $x$ does not have neighbors in $B_{i-1}$. If not, at least one neighbor of $x$ is in $B_{i-1}$. Symmetrically, $x$ also is a neighbor of $B_{i-1}$ and it shall be in $C_{i-1}=\Gamma(B_{i-1})\cap V_{i-1}$. So the bottleneck decomposition tells us $x$ should be removed before the $i$-th decomposition. It's a contradiction to the condition that $x\in G_i$. Since $x$'s neighbors in $G_{i-1}$ are all contained in $C_{i-1}$, let us consider another set $B_{i-1}\cup \{x\}$ whose neighborhood in $V_{i-1}$ still is $C_{i-1}$. Obviously, its $\alpha$-ratio is $\frac{w(C_{i-1})}{w(B_{i-1})+w_x}<\alpha_{i-1}$, contradicting the fact $(B_{i-1},C_{i-1})$ is the maximal bottleneck pair in $G_{i-1}$.

 To prove the monotonic increasing property of sequence $\{\alpha_i\}$, we suppose to the contrary that there exists an index $i\in \{1,2,\cdots,k\}$, such that $\alpha_i\geq \alpha_{i+1}$, and focus on two pairs $(B_i,C_i)$ and $(B_{i+1},C_{i+1})$. Recall Definition \ref{BD} of bottleneck decomposition, $G_{i+1}=G_i-(B_i\cup C_i)$. Thus $B_{i+1}\cap B_{i}=\emptyset$, $C_{i+1}\cap C_{i}=\emptyset$ and $B_i\cup B_{i+1}\subseteq V_i$, $\Gamma(B_i\cup B_{i+1})\cap V_{i}=C_i\cup C_{i+1}$. And the pair $(B_{i}\cup B_{i+1},\Gamma(B_i\cup B_{i+1})\cap V_{i})$ in $G_i$ has its $\alpha$-ratio
   \begin{eqnarray}
     \frac{w(\Gamma(B_i\cup B_{i+1})\cap V_{i})}{w(B_{i}\cup B_{i+1})}=\frac{w(C_i\cup C_{i+1})}{w(B_i\cup B_{i+1})}=\frac{w(C_i)+w(C_{i+1})}{w(B_i)+w(B_{i+1})}\leq \alpha_{i},\label{eqn2}
   \end{eqnarray}
   where the last inequality is from the assumption that $\frac{w(C_i)}{w(B_i)}=\alpha_i\geq \alpha_{i+1}=\frac{w(C_{i+1})}{w(B_{i+1})}$.
   On the other hand, we know $(B_i,C_i)$ is the maximal bottleneck of $G_i$ with $\alpha$-ratio $\alpha_i$. The definition of maximal bottleneck makes $\frac{w(\Gamma(B_i\cup B_{i+1})\cap V_{i})}{w(B_{i}\cup B_{i+1})}> \alpha_i$ since $B_i\subset B_i\cup B_{i+1}$. It contradicts to (\ref{eqn2}). So there does not exist an index $i$ such that $\alpha_i\geq \alpha_{i+1}$ and the first claim holds.

   Next we turn to the second claim. If $\alpha_i=1$, it must be $B_i=V_i=C_i$, because $V_i$ is the maximal one with $\alpha$-ratio of 1. Therefore, $G_{i+1}=\emptyset$ which means $i=k$. For the case $\alpha_i<1$, if $B_i$ is not independent, then $B_i\cap C_i\neq \emptyset$. Let us consider subset $\widetilde{B}=B_i-(B_i\cap C_i)$. Obviously $\widetilde{B}\neq \emptyset$, otherwise $B_i\subseteq C_i$ and $\alpha_i\geq 1$. For any vertex $v\in \widetilde{B}$, its neighbors in $G_i$ must be contained in $C_i-B_i$, i.e., $\Gamma(\widetilde{B})\cap V_i\subseteq C_i-B_i$. So $\frac{w(\Gamma(\widetilde{B})\cap V_i)}{w(\widetilde{B})}\leq\frac{w(C_i)-w(B_i\cap C_i)}{w(B_i)-w(B_i\cap C_i)}<\alpha_i$, contradicting to the minimality of $\alpha_i$.
 $\square$
 \end{proof}

 \begin{proposition}
   Given the bottleneck decomposition $B=\{(B_1,C_1),\cdots,(B_k,C_k))\}$. If the price to each vertex is set as: for $u\in B_i$, let $p_u=\alpha_iw_u$; and for $u\in C_i$, let $p_u=w_u$, then $(p,X)$ is a market equilibrium, where $X$ is the allocation from BD Mechanism. Furthermore, each agent $u$'s utility is $U_u=w_u\cdot \alpha_i$, if $u\in B_i$; otherwise $U_u=w_u/ \alpha_i$.
 \end{proposition}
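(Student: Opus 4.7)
The plan is to verify the three market-equilibrium conditions for every agent, splitting into the cases $\alpha_i<1$ and $\alpha_k=1$, and to read the utility formula off the same computation. Market clearance and the budget constraint drop out of direct substitution into the BD-Mechanism formulas; the real work is in establishing individual optimality, which requires a structural claim about the adjacency of each agent in the bottleneck decomposition.

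For $u\in B_i$ with $\alpha_i<1$, the mechanism sets $x_{uv}=f_{uv}/w_u$ for $v\in C_i$ and zero otherwise, and the source edges of the auxiliary bipartite network are saturated with capacity $w_u$, so $\sum_{v\in\Gamma(u)}x_{uv}=1$. Plugging $x_{vu}=\alpha_i f_{uv}/w_v$ into the utility and expenditure expressions gives $U_u=\sum_v x_{vu}w_v=\alpha_i w_u$ and $\sum_v x_{vu}p_v=\alpha_i w_u=p_u$, so the budget is tight. Symmetric computations on $v\in C_i$ yield $U_v=w_v/\alpha_i$ and $\sum_u x_{uv}p_u=w_v=p_v$. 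The $\alpha_k=1$ case is handled in the same way after replacing max-flow saturation with Hall's theorem on $\hat G$.

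The hard part is individual optimality. Since each agent's program is a linear program in $x_{\cdot i}\geq 0$ with a single budget constraint, an optimum concentrates all purchasing on neighbors maximizing the bang-per-buck ratio $w_j/p_j$. Under the prescribed prices this ratio equals $1/\alpha_\ell$ for $j\in B_\ell$ and $1$ for $j\in C_\ell$, and Proposition \ref{prop2} orders these values as $1/\alpha_1>\cdots>1/\alpha_k\geq 1$.

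The remaining step, which I expect to be the main obstacle, is the following adjacency lemma extracted from Definition \ref{BD} and Proposition \ref{prop2}: for any $u\in B_i$, $\Gamma(u)\cap B_\ell=\emptyset$ for every $\ell$; for any $v\in C_i$, $\Gamma(v)\cap B_\ell=\emptyset$ for every $\ell<i$. Each case is a short contradiction using $V_{\ell+1}=V_\ell\setminus(B_\ell\cup C_\ell)$: a neighbor in $B_\ell$ with $\ell<i$ would place the agent into $\Gamma(B_\ell)\cap V_\ell=C_\ell$, contradicting its actual index; a neighbor of $u\in B_i$ in $B_\ell$ with $\ell>i$ would conversely force that neighbor into $C_i$, contradicting $B_\ell\cap C_i=\emptyset$; the case $\ell=i$ for $u\in B_i$ with $\alpha_i<1$ is ruled out by the independence of $B_i$ from Proposition \ref{prop2}. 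Given this lemma, every ratio reachable from $u\in B_i$ equals $1$, so the budget-tight BD allocation is optimal for $u$; every ratio reachable from $v\in C_i$ is at most $1/\alpha_i$, with equality exactly on $B_i\cap\Gamma(v)$, which is precisely where the BD allocation concentrates $v$'s entire budget. The case $\alpha_k=1$, $u\in B_k=C_k$ falls under the first clause of the lemma with all reachable ratios equal to $1$, again making any budget-tight allocation optimal.
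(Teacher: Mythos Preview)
Your proposal is correct and, for market clearance and the budget constraint, matches the paper's computation essentially verbatim: both plug the flow identities $\sum_{v}f_{uv}=w_u$ and $\sum_{u}f_{uv}=w_v/\alpha_i$ into the price definitions to obtain equality in the budget.

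For individual optimality, your argument is in fact more complete than the paper's. The paper argues by writing, for $u\in B_i$, the chain $\sum_{v\in\Gamma(u)\cap C_i}x_{vu}p_v=\sum_{v\in\Gamma(u)\cap C_i}x_{vu}w_v=U_u\le p_u=\alpha_i w_u$, and symmetrically for $v\in C_i$; it then declares that the BD allocation attains the bound. This silently assumes that the sum over $\Gamma(u)\cap C_i$ already captures the full utility, or equivalently that no neighbor with a better bang-per-buck ratio exists---precisely the structural fact you single out as the ``main obstacle''. Your adjacency lemma ($u\in B_i$ has no neighbor in any $B_\ell$; $v\in C_i$ has no neighbor in $B_\ell$ for $\ell<i$), proved via the removal rule $V_{\ell+1}=V_\ell\setminus(B_\ell\cup C_\ell)$ and the independence of $B_i$, is exactly what is needed to justify this step, and your bang-per-buck formulation makes the optimality transparent. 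So your route is the same in outline but supplies the missing justification the paper omits.
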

 \begin{proof}
   From BD Mechanism, it is not hard to see all resource of each agent is allocated out, which promises the condition of market clearance. In addition, BD Mechanism assigns all resource of each agent to its neighbors from the same pair, that is all available resource are exchanged along edges in $B_i\times C_i$, $i=1,\cdots,k$. More specifically, for any $u\in B_i$, $x_{uv}w_u=f_{uv}$ and $\sum_{v\in \Gamma(u)\cap C_i}f_{uv}=w_u$; and for any $v\in C_i$, $x_{vu}w_v=\alpha_if_{uv}$ and $\sum_{u\in \Gamma(v)\cap B_i}f_{uv}=\frac{w_v}{\alpha_i}$, where $\{f_{uv}\}$ is the maximum flow in the network constructed from bipartite graph of $\hat{G}=(B_i,C_i;E_i)$.
   Now let us turn to the budget constraints based on the previous equalities. For each agent $u\in B_i$,
   \begin{eqnarray}\label{eqn5}
     \sum_{v\in \Gamma(u)\cap C_i}x_{vu}p_v=\sum_{v\in \Gamma(u)\cap C_i}x_{vu}w_v=\alpha_i\sum_{v\in \Gamma(u)\cap C_i}f_{uv}=\alpha_i w_u=p_u;
   \end{eqnarray}
   and for each agent $v\in C_i$,
     \begin{eqnarray}\label{eqn6}
     \sum_{u\in \Gamma(v)\cap B_i}x_{uv}p_u=\alpha_i\sum_{u\in \Gamma(v)\cap B_i}x_{uv}w_u=\alpha_i\sum_{u\in \Gamma(v)\cap B_i}f_{uv}= \alpha_i\cdot\frac{w_v}{\alpha_i}=w_v=p_v.
   \end{eqnarray}
 Thus the allocation from BD mechanism satisfies the budget constraints.

 For the individual optimality, we shall discuss the optimization problems for each agent that $\max\sum_{v\in \Gamma(u)}x_{vu}w_v$, subject to
 $\sum_{v\in \Gamma(u)}x_{vu}p_v\leq p_u$. From the budget constraint, the following inequalities can be deduced directly if the prices are defined as $p_u=\alpha_iw_u$ for $u\in B_i$ and $p_v=w_v$ for $v\in C_i$, that is for each $u\in B_i$,
 \begin{eqnarray*}
  \sum_{v\in \Gamma(u)\cap C_i}x_{vu}p_v=\sum_{v\in \Gamma(u)\cap C_i}x_{vu}w_v=U_u\leq p_u=\alpha_iw_u;
 \end{eqnarray*}
 and for each $v\in C_i$,
 \begin{eqnarray*}
     \sum_{u\in \Gamma(v)\cap B_i}x_{uv}p_u=\alpha_i\sum_{u\in \Gamma(v)\cap B_i}x_{uv}w_u=\alpha_iU_v\leq p_v=w_v.
   \end{eqnarray*}
 Obviously, above two inequalities indicate the upper bound of agent's utility, i.e. $U_u\leq  \alpha_iw_u$ if $u\in B_i$, and $U_v\leq \frac{w_v}{\alpha_i}$ if $v\in C_i$. In addition, the deduction of (\ref{eqn5}) and (\ref{eqn6}) shows each agent's utility reaches the upper bound following the allocation from BD Mechanism. Therefore the condition of individual optimality is satisfied. Of course we can get $U_u=w_u\cdot \alpha_i$, if $u\in B_i$ and $U_u=w_u/ \alpha_i$, if $u\in C_i$.
 $\square$
 \end{proof}

\end{document}